\newcommand{\Tr}{\mathrm{Tr}}
\definecolor{refcolor}{rgb}{0.3,0.3,1}
\newtheorem{theorem}{Theorem}[section]
\newtheorem{proposition}{Proposition}[section]
\newtheorem{lemma}{Lemma}[section]
\newtheorem{corollary}{Corollary}[section]
\newtheorem{definition}{Definition}[section]
\newtheorem*{question}{Question}
\renewcommand{\thesection}{\arabic{section}}
\renewcommand{\thesubsection}{\thesection.\arabic{subsection}}
\renewcommand{\p@subsection}{}
\titleformat*{\section}{\large\bfseries}
\titleformat*{\subsection}{\bfseries}
\begin{document}

\pagenumbering{gobble}

\title{
	\fontsize{15pt}{15pt}\selectfont
    Application of resource theory
    based on free Clifford+$kT$ computation
    to early fault-tolerant quantum computing
    \vspace{6pt}
}

\author{Yuya O. Nakagawa}
\email{nakagawa@qunasys.com}
\author{Yasunori Lee}

\affiliation{\vspace{12pt}\small QunaSys,
1-13-7 Hakusan, Bunkyo, Tokyo 113-0001, Japan}

\begin{abstract}
Recent advances in quantum hardware are bringing fault-tolerant quantum computing (FTQC) closer to reality.
In the early stage of FTQC,
however, the numbers of available logical qubits and high-fidelity $T$ gates remain limited, making it crucial to optimize the quantum resource usage.
In this work, we aim to study the simulation cost of general quantum states under the constraint that only $k$ $T$ gates can be used, alongside an unlimited number of Clifford gates.
Inspired by the notion of robustness of magic (RoM) which quantifies the cost of quantum-circuit simulation using stabilizer states ($k = 0$), we introduce its generalization, which we call Clifford+$kT$ robustness, treating Clifford+$kT$ states as free resources.
We explore theoretical properties of Clifford+$kT$ robustness and in particular derive a lower bound that reveals the (in)efficiency of quantum-circuit simulation using Clifford+$kT$ states.
Through numerical computations, we also evaluate Clifford+$kT$ robustness for key resource states for universal quantum computation, such as tensor products of the magic states.
Our results allow to assess the sampling-cost reduction achieved by the use of Clifford+$kT$ states instead of stabilizer states, providing practical guidance for efficient resource usage in the early-FTQC era.
\end{abstract}

\maketitle

\newpage

\tableofcontents

\newpage

\pagenumbering{arabic}

\section{Introduction}\label{sec:intro}
Research and development of quantum computing hardware have progressed rapidly over the past decade, leading to the demonstration of various quantum computer prototypes.
While almost all quantum devices realized so far have been noisy and lacked error correction, recent years have witnessed a growing number of reports possibly leading to the realization of scalable fault-tolerant quantum computation (FTQC).\footnote{
	See \textit{e.g.} \cite{Régent:2025} for a comprehensive list of notable experiments.
}
As a result, increasing attention is being directed toward the early-FTQC era.
In this regime, gate fidelities and the number of available logical qubits remain limited, making it crucial to devise methods for performing meaningful computations with the scarce resources and to estimate the resource requirements for concrete problems in practical applications.\footnote{
	See \textit{e.g.} \cite{kuroiwa2023averaging} and references therein for a partial list of related works.
}

In general FTQC architectures, a quantum circuit is decomposed into elementary gates which can be implemented in a fault-tolerant manner.
One of the most widely used sets of such gates consists of Clifford gates and a $T$ gate.
The Gottesman-Knill theorem~\cite{gottesman1998heisenberg} states that quantum circuits composed solely of the former (and measurements in the computational basis) is efficiently simulatable on classical computers, and thus the power of quantum computation beyond its classical counterpart is necessarily attributed to the latter.
At the same time, the implementation of $T$~gates is in general significantly more costly than that of Clifford gates also on quantum devices,
requiring \textit{e.g.} magic state distillation~\cite{Knill:2004,BravyiKitaev:2004} which consumes a large number of clock cycles and physical qubits.\footnote{
	\setlength{\baselineskip}{16pt}%
	However, recent proposals such as zero-level magic state distillation~\cite{itogawa2025} and magic state cultivation~\cite{gidney2024magic} might reduce the cost of $T$-gate implementation.
}
Therefore, the number of necessary $T$~gates typically serves as a key metric for the cost of a (fault-tolerant) quantum algorithm.

Due to the hardware constraints, the number of executable $T$ gates is expected to be severely limited in the early-FTQC era.
This motivates us to explore ways to carry out the desired quantum computations without increasing the number of $T$ gates,
possibly with the help of classical high-performance computing resources,
which is of critical importance to utility of quantum computers at their early stage.
More concretely, we are led to the following question:

\newpage

\begin{question}
	Given the ability to perform quantum computations involving only $k$ $T$~gates (with an unlimited number of Clifford gates),
	how can we simulate quantum circuits/states that require more than $k$ $T$ gates?
\end{question}

The $k=0$ case corresponds to the ordinary classical simulation 
using only Clifford gates, which has already been studied for a long time, notably since the work of \cite{AaronsonGottesman:2004}.
Along the way, two major approaches have been established, respectively based on
\begin{itemize}
\item stabilizer rank~\cite{BravyiSmithSmolin:2015, BravyiGosset:2016} and 
\item (negativity of) quasi-probability representation over stabilizer states~\cite{PashayanWallmanBartlett:2015, HowardCampbell:2016} 
\end{itemize}
of the target quantum state.
While the former approach actually arose from a qubit version of the above Question \cite{BravyiSmithSmolin:2015} and shares essentially the same spirit as ours,
it is known that stabilizer rank of a given quantum state is usually not easy to compute.
On the other hand, the latter approach is blessed with (at least relative) computability,
where the value of a quantity called the \emph{robustness of magic (RoM)} characterizes the cost of simulation via sampling over stabilizer states~\cite{HowardCampbell:2016}. 
In this work, we generalize the notion of RoM to extend its framework to $k \geq 1$ cases by replacing stabilizer states with what we call \emph{Clifford+$kT$ states} (a precise definition will be provided in Section~\ref{sec:notations}),
and investigate, for example, the cost of simulating a general quantum state via sampling over the Clifford+$kT$ states.

Here we note that the notion of Clifford+$kT$ gates/states has also appeared in previous works on \textit{e.g.}
\begin{itemize}
	\item \textit{learning} of a given unknown Clifford+$kT$ gate/state via process/state tomography\\
	\cite{lai2022, leone2024learningefficient, grewal2023efficient, Leone2024learningtdoped, hangleiter2024, ChiaLaiLin:2023}, or
	\item approximate \textit{design} of (pseudo-)random unitaries/states by Clifford+$kT$ gates/states\\
	\cite{HaferkampMontealegreMoraHeinrichEisertGrossRoth:2020, GrewalIyerKretschmerLiang:2024_pseudorandom_improvedbound, MagniChristopoulosDeLucaTurkeshi:2025, LeoneOlivieroHammaEisertBittel:2025, ZhangVijayGuBao:2025},
\end{itemize}
often under the name of \textit{$t$-doped Clifford unitaries/stabilizer states}
(where $t$ corresponds to our $k$, and in some cases allows generic non-Clifford gates in place of $T$ gates).
Although the foci of these works are somewhat different from ours, analysis in this work might also shed new light on these studies.

\newpage

The contributions and structure of this paper are as follows:
\begin{itemize}
	\item In Section~\ref{sec:notations}, we introduce our notation and formally define \emph{Clifford+$kT$ gates/states}.
	
	\item In Section~\ref{sec:CkT robustness}, we define \emph{Clifford+$kT$ robustness}, a resource-theoretic quantity treating Clifford+$kT$ states as free, and investigate its properties
	basically in the same way as the case of \emph{robustness of magic (RoM)}~\cite{HowardCampbell:2016}. In particular, 
	\begin{itemize}
		\item[$\circ$] We show that Clifford+$kT$ robustness of a general state characterizes the simulation sampling cost of the state.
		\item[$\circ$] We also derive a lower bound of Clifford+$kT$ robustness 
			and demonstrate an operational meaning of Clifford+$kT$ robustness by considering state conversion under Clifford+$kT$ channels. 
	\end{itemize}
	
	\item In Section~\ref{sec:numerics}, 
	we enumerate $n$-qubit Clifford+$kT$ states for small $n,k$ and use them to calculate Clifford+$kT$ robustness of various non-Clifford states including the tensor products of the magic states.
	In particular,
	\begin{itemize}
		\item[$\circ$] We find examples where the additional use of Clifford+$kT$ states succeeds in lowering the sampling cost compared to the ordinary stabilizer-based simulation.
		\item[$\circ$] We also find examples where an increase in $k$ does not immediately lead to reduction in Clifford+$kT$ robustness (and hence in the simulation sampling cost).
		\item[$\circ$] As a by-product, we also provide normal forms for single-qubit Clifford+$kT$ \emph{states} (which resembles the known normal forms for single-qubit Clifford+$kT$ \emph{gates} \cite{MatsumotoAmano:2008,GilesSelinger:2013,GossetKliuchnikovMoscaRusso:2013}).
	\end{itemize}
	These results would give some hints toward simulation strategies for optimal use of the limited number of $T$ gates in the early-FTQC era.
	
	\item In Section~\ref{sec:conclusion}, we present the outlook and possible future directions.
\end{itemize}

\newpage

%%%%%%%%%%%%%%%%%%%%%%%%%%%%%%%%%%
\section{Preliminaries} \label{sec:notations}
%%%%%%%%%%%%%%%%%%%%%%%%%%%%%%%%%%
Let us begin by summarizing notation and terminology to be used.

\subsection{$T$ gate and Clifford gates}\label{subsec:t_and_clifford_gates}
A \emph{Pauli gate/operator} $P$ acting on $n$ qubits is defined as a tensor product of $n$ one-qubit Pauli operators $P^{(q)}$ each acting on a $q$-th qubit
\begin{equation*}
 P =\bigotimes_{q=1}^n P^{(q)} \quad \Big(P^{(q)} \in \{I,X,Y,Z\}\Big),
\end{equation*}
where the four gates are respectively
\begin{equation*}
		I = \left(
			\begin{array}{wc{16pt}wc{16pt}}
				1 & 0\\
				0 & 1
			\end{array}
		\right),
		\quad
		X = \left(
			\begin{array}{wc{16pt}wc{16pt}}
				0 & 1\\
				1 & 0
			\end{array}
		\right),
		\quad
		Y = \left(
			\begin{array}{wc{16pt}wc{16pt}}
				0 & -i\\
				i & 0
			\end{array}
		\right),
		\quad
		Z = \left(
			\begin{array}{wc{16pt}wc{16pt}}
				1 & 0\\
				0 & -1
			\end{array}
		\right)
\end{equation*}
in terms of the computational basis
\begin{equation*}
	\ket{0} = \left(
		\begin{array}{c}
			1\\
			0
		\end{array}
	\right),
	\quad
	\ket{1} = \left(
		\begin{array}{c}
			0\\
			1
		\end{array}
	\right).
\end{equation*}
Note that we do not include global phase factors $\pm 1, \pm i$ in the definition of Pauli operators.\footnote{
	\setlength{\baselineskip}{16pt}%
	Including these global phase factors, Pauli operators form a ($n$-qubit Pauli) group under ordinary multiplication.
}
Since $P^2 = I^{\otimes n}$ for any Pauli operator $P$, a rotation $R_P(\theta)$ generated by $P$ is given by
\begin{equation}\label{eq:pauli-rotation-def}
	R_P(\theta)
	\coloneqq \exp(-i \tfrac{\theta}{2} P)
	= \cos \tfrac{\theta}{2} \cdot I^{\otimes n} - i\sin\tfrac{\theta}{2} \cdot P.
\end{equation}
The \emph{$T$ gate}, which is of our primary interest in this paper, is a one-qubit unitary operator
\begin{equation*}
		T = \left(
			\begin{array}{wc{16pt}c}
				1 & 0\\
				0 & e^{2\pi i/ 8}
			\end{array}
		\right)
\end{equation*}
and is in fact equivalent to a Pauli rotation $R_Z(\frac{\pi}{4})$ up to a global phase factor, namely, $T = e^{+i\pi/8} R_Z(\frac{\pi}{4})$
(or letting $\sim$ denote the equivalence up to a global phase factor, $T \sim R_Z(\tfrac{\pi}{4})$).

\newpage

An $n$-qubit operator $C$ such that either $CPC^\dagger$ or $-CPC^\dagger$ is a Pauli operator whenever $P$ is a Pauli operator is called a \emph{Clifford gate/operator}.\footnote{
	\setlength{\baselineskip}{16pt}%
	In other words, $n$-qubit Clifford gates/operators are the elements of the normalizer of the $n$-qubit Pauli group in $U(2^n)$.
	It is known that those with entries in $\mathbb{Q}[e^{2\pi i/8}]$ form a group
	whose quotient by a group generated by the $n$-qubit Pauli group and $e^{2\pi i/8}I$ is isomorphic to $Sp(2n,\mathbb{Z}/2\mathbb{Z})$ \cite{CalderbankRainsShorSloane:1996}.
	Therefore, the number of $n$-qubit Clifford gates up to global phase factors is given by $4^{n} \cdot |Sp(2n,\mathbb{Z}/2\mathbb{Z})|$
	where $4^{n}$ is the number of $n$-qubit Pauli operators and $|Sp(2n,\mathbb{Z}/2\mathbb{Z})| = 2^{n^2} \prod_{j=1}^n(4^j - 1)$. 
	This indeed reproduces the number for the $n=1$ case as $4^1 \cdot 6 = 24$.
}
From Eq.~\eqref{eq:pauli-rotation-def}, it transforms a Pauli rotation operator as
\begin{equation} \label{eq:CR_PC=R_P'}
	CR_P(\theta) C^\dagger = R_{CPC^\dagger}(\pm\theta).
\end{equation}
Clifford gates constitute a group under multiplication
which is generated by the following Hadamard, $S$, and CNOT gates, up to global phase factors 
(see \textit{e.g.} \cite{Gottesman:1997}):
\begin{equation*}
	H = \frac{1}{\sqrt{2}} \left(
		\begin{array}{wc{16pt}wc{16pt}}
			1 & 1\\
			1 & -1
		\end{array}
	\right),
	\quad
	 S = \left(
 		\begin{array}{wc{16pt}wc{16pt}}
 			1 & 0\\
 			0 & i
 		\end{array}
 	\right),
	 \quad
	 \mathrm{CNOT} =
	 \ket{0}\!\bra{0} \otimes I  + \ket{1}\!\bra{1} \otimes X,
\end{equation*}
where the latter is a two-qubit gate and can act on any pair of qubits.

For example for the $n=1$ case, the two gates $H$ and $S$ indeed transform each Pauli gate as
\begin{align}\label{eq:one-qubit_clifford_transformations}
		HXH^{\dagger} & = Z, & H\,YH^{\dagger} & = -Y, & HZH^{\dagger} & = X, \nonumber\\
		SXS^{\dagger} & = Y, & S\,YS^{\dagger} & = -X, & SZS^{\dagger} & = Z.
\end{align}
Also, from the relations
\begin{align*}
	H^2 &= I,\\
	S^4 &= I,\\
	(SH)^3 &= e^{2\pi i/8}I,
\end{align*}
one can verify that, up to global phase factors, they generate the symmetric group $S_4$
and hence there are $|S_4| = 24$ one-qubit Clifford gates in total. 
In particular, the Pauli gates are (Clifford gates by definition and) generated by $H$ and $S$ gates as
\begin{align*}
	X &= HSSH,\\
	Y &= SHSSHSSS,\\
	Z &= SS,
\end{align*}
as can be seen from Eq.~\eqref{eq:one-qubit_clifford_transformations}.

\newpage

%%%%%%%%%%%%%%%%%%%%%%%%%%%%%%%%%%%%%%
\subsection{Stabilizer states and magic states}
%%%%%%%%%%%%%%%%%%%%%%%%%%%%%%%%%%%%%%
It is known that an $n$-qubit state $\ket{\psi}$ can be constructed by applying suitable Clifford gate(s) to $\ket{0}^{\otimes n}$
if and only if $\ket{\psi}$ is \textit{stabilized} by (\textit{i.e.} an eigenstate $P_i\ket{\psi} = \pm\ket{\psi}$ of) $2^n$~Pauli gates~\cite{AaronsonGottesman:2004}.
Such states are called the \emph{stabilizer states}.
For example for the $n=1$ case, there are six stabilizer states in total (up to a global phase factor);
they are stabilized as
\begin{align*}
		X \tfrac{\ket{0}+\ket{1}}{\sqrt{2}} & = +\tfrac{\ket{0}+\ket{1}}{\sqrt{2}}, &
		X \tfrac{\ket{0}-\ket{1}}{\sqrt{2}} & = -\tfrac{\ket{0}-\ket{1}}{\sqrt{2}},\\
		Y \tfrac{\ket{0}+i\ket{1}}{\sqrt{2}} & = +\tfrac{\ket{0}+i\ket{1}}{\sqrt{2}}, &
		Y \tfrac{\ket{0}-i\ket{1}}{\sqrt{2}} & = -\tfrac{\ket{0}-i\ket{1}}{\sqrt{2}},\\
		Z \ket{0} & = +\ket{0}, &
		Z \ket{1} & = -\ket{1},
\end{align*}
in addition to $I\ket{\psi} = +\ket{\psi}$ for all states $\ket{\psi}$ and constructed as
\begin{align*}
		H\ket{0} & = \tfrac{\ket{0}+\ket{1}}{\sqrt{2}} \eqqcolon \ket{+}, &
		HX\ket{0} & = \tfrac{\ket{0}-\ket{1}}{\sqrt{2}} \eqqcolon \ket{-},\\
		SH\ket{0} & = \tfrac{\ket{0}+i\ket{1}}{\sqrt{2}}, &
		SHX\ket{0} & = \tfrac{\ket{0}-i\ket{1}}{\sqrt{2}},\\
		\ket{0} &,&
		X\ket{0} & = \ket{1}.
\end{align*}

Among one-qubit pure states,
there exist two non-stabilizer states $\ket{H}$ and $\ket{SH}$
of special importance called the \textit{magic states} \cite{BravyiKitaev:2004} defined by
\begin{align}
	\ket{H}\!\bra{H} & = \frac{1}{2}\left(I + \frac{X+Z}{\sqrt{2}}\right), \label{eq:def of H state} \\[6pt]
	\ket{SH}\!\bra{SH} & = \frac{1}{2}\left(I + \frac{X+Y+Z}{\sqrt{3}}\right). \label{eq:def of SH state}
\end{align}
As the notation shows, they are eigenstates of $H$ and $SH$ respectively,
which follows \textit{e.g.} from Eq.~\eqref{eq:one-qubit_clifford_transformations}.
In particular, the former can be written more explicitly as\footnote{
	\setlength{\baselineskip}{16pt}%
	Using $T^\dagger XT = \frac{X-Y}{\sqrt{2}}$, one can immediately verify
	\begin{equation*}
		HTHS \left[\frac{1}{2}\left(I + \frac{X+Z}{\sqrt{2}}\right)\right] S^\dagger HT^\dagger H
		=
		HT \left[\frac{1}{2}\left(I + \frac{-Y+X}{\sqrt{2}}\right)\right] T^\dagger H
		=
		H \left[\frac{1}{2}\left(I + X\right)\right] H
		=
		\ket{0}\!\bra{0}.
	\end{equation*}
}
\begin{equation*}
	\ket{H} \sim S^\dagger H T^\dagger H\ket{0}.
\end{equation*}
This shows that $\ket{H}$ is in fact equivalent to $T\ket{+}$ (up to Clifford gates)
which serves as a resource state for applying $T$ gates to a given quantum circuit by gate teleportation~\cite{ZhouLeungChuang:2000}.

\newpage

%%%%%%%%%%%%%%%%%%%%%%%%%%%%%%%%%%%%%%
\subsection{Clifford+$kT$ gates/states}\label{subsec:Clifford+kt_gates/states}
%%%%%%%%%%%%%%%%%%%%%%%%%%%%%%%%%%%%%%

We first define Clifford+$kT$ gates.
\begin{definition}[Clifford+$kT$ gates]\label{def:Clifford+kT_gates}
	An $n$-qubit gate $U$ is called a Clifford+$kT$ gate if it can be written as
	\begin{equation*}
		U = C_k T^{(q_k)} C_{k-1} T^{(q_{k-1})} \ldots C_1 T^{(q_1)} C_0
	\end{equation*}
	where $C_i$'s are ($n$-qubit) Clifford gates and $T^{(q_i)}$'s are $T$ gates acting on a $q_i$-th qubit. 
\end{definition}
\noindent
Since $T \sim R_{Z}(\frac{\pi}{4})$, Eq.~\eqref{eq:CR_PC=R_P'} allows us to rewrite $U$ as
\begin{align}\label{eq:Clifford+kT_gate_seq_rot_normal_form}
	&
	C_k T^{(q_k)}
	\underbrace{C_k^\dagger C_k}_{I}
	C_{k-1} T^{(q_{k-1})}
	\underbrace{(C_{k} C_{k-1})^\dagger C_{k} C_{k-1}}_{I} \cdots \notag\\
	\sim \  &
	R_{P_k}(\pm\tfrac{\pi}{4})
	R_{P_{k-1}}(\pm\tfrac{\pi}{4})
	\cdots
	R_{P_1}(\pm\tfrac{\pi}{4})
	\underbrace{C_k C_{k-1} \cdots C_1 C_0}_{C},
\end{align}
namely a Clifford gate $C$ followed by $k$ sequential non-trivial Pauli rotations of $\theta = \pm\frac{\pi}{4}$,
up to a global phase factor.
We will later utilize this expression (strictly speaking its improved version \cite{GossetKliuchnikovMoscaRusso:2013}) in numerical enumeration of Clifford+$kT$ states (Sec.~\ref{sec:numerics}).

Let us call $U$ a \emph{strict} Clifford+$kT$ gate if $U$ is not a Clifford+$k'T$ gate for any $k' < k$.
For $n=1$, it is known that a strict Clifford+$kT$ gate can be uniquely expressed (up to a global phase factor)
in the Matsumoto-Amano normal form~\cite{MatsumotoAmano:2008, GilesSelinger:2013} as
\begin{equation*}
	(\varepsilon|T)(HT|SHT)^\ast C
\end{equation*}
with $k$ $T$ gates appearing in total (and $C$ is a Clifford gate),
or in other words (for $k \geq 1$)
\begin{equation}\label{eq:Matsumoto-Amano_normal_form}
	\begin{array}{rl}
		T\,(HT|SHT)\{k-1\}\,C,\\
		HT\,(HT|SHT)\{k-1\}\,C, &\text{ or}\\
		SHT\,(HT|SHT)\{k-1\}\,C,
	\end{array}
\end{equation}
both in regular expression.
It is guaranteed that any two distinct expressions correspond to different strict Clifford+$kT$ gates,
and therefore there are $3 \cdot 2^{k-1} \cdot 24$ of them in total.

Here, we also mention some basic properties of strict Clifford+$kT$ gates for later use:

\newpage

\begin{proposition}\label{prop:hermite_conjugate}
	For any $n$-qubit strict Clifford+$kT$ gate $U$,
	its Hermite conjugate $U^\dagger$ is also a strict Clifford+$kT$ gate.
\end{proposition}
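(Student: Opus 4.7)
The plan is to show that Hermitian conjugation preserves the Clifford$+kT$ form by observing that a single $T^\dagger$ gate is itself equivalent to one $T$ gate with a Clifford prefactor. The key identity I would use is $T^\dagger = S^\dagger T$, which can be verified in one line either by comparing diagonal entries in the computational basis, or by writing $T^\dagger = T^7 = T \cdot (T^2)^3 = T \cdot S^3 = T \cdot S^\dagger = S^\dagger T$ (noting that $T$ and $S$ commute as diagonal matrices). Since $S^\dagger$ is a Clifford gate, this identity will let me convert every $T^\dagger$ appearing in $U^\dagger$ back into a $T$ gate at the cost of inserting one more Clifford factor.

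Concretely, I would start from the defining expression $U = C_k T^{(q_k)} C_{k-1} T^{(q_{k-1})} \cdots C_1 T^{(q_1)} C_0$ and take the Hermitian conjugate to obtain $U^\dagger = C_0^\dagger T^{(q_1)\dagger} C_1^\dagger T^{(q_2)\dagger} \cdots C_{k-1}^\dagger T^{(q_k)\dagger} C_k^\dagger$. Then I would replace each $T^{(q_i)\dagger}$ by $S^{(q_i)\dagger} T^{(q_i)}$ and absorb each $S^{(q_i)\dagger}$ into the Clifford block $C_{i-1}^\dagger$ immediately to its left. The result is an expression of exactly the form of Definition~\ref{def:Clifford+kT_gates} with precisely $k$ occurrences of a $T$ gate (and $k+1$ Clifford gates), so $U^\dagger$ is a Clifford$+kT$ gate, hence a Clifford$+k''T$ gate for some $k'' \leq k$.

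For the \emph{strict} part, I would argue by contradiction: if $U^\dagger$ were a Clifford$+k'T$ gate for some $k' < k$, then applying the very same construction to $U^\dagger$ (using $(U^\dagger)^\dagger = U$) would express $U$ as a Clifford$+k'T$ gate, contradicting the hypothesis that $U$ is strict Clifford$+kT$. Therefore $U^\dagger$ must also be strict Clifford$+kT$. The main obstacle is really just identifying and justifying the simple identity $T^\dagger = S^\dagger T$; once that is in place the rest of the argument is bookkeeping about where Clifford factors get absorbed, and the strictness half is a one-line symmetry argument.
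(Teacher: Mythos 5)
Your proof is correct and takes essentially the same route as the paper's: show that Hermitian conjugation cannot increase the number of $T$ gates needed in the defining decomposition, then apply the same observation to $(U^\dagger)^\dagger = U$ to force the strict counts to coincide. The only difference is that you spell out the identity $T^\dagger = S^\dagger T$ (and the bookkeeping of absorbing the extra Clifford factors), a step the paper treats as immediate from the definition.
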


\begin{proof}
	Suppose $U^\dagger$ is a strict Clifford+$k'T$ gate.
	From Definition \ref{def:Clifford+kT_gates}, it follows that $k' \leq k$.
	The same is true for $(U^\dagger)^\dagger = U$, which implies $k \leq k'$ (and hence $k'=k$).
\end{proof}

\begin{proposition}\label{prop:adding_Clifford}
	For any $n$-qubit strict Clifford+$kT$ gate $U$ and any Clifford gate $C$,
	gates $UC$ and $CU$ are also strict Clifford+$kT$ gates.
\end{proposition}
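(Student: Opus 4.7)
The plan is to mimic the strategy already used in Proposition \ref{prop:hermite_conjugate}: establish that $UC$ (and $CU$) are at worst Clifford+$kT$ gates from the definition, and then invert the operation to get the matching lower bound on the minimum required number of $T$ gates.

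First I would observe that $UC$ is automatically a Clifford+$kT$ gate: writing $U = C_k T^{(q_k)} C_{k-1} T^{(q_{k-1})} \cdots C_1 T^{(q_1)} C_0$ as in Definition \ref{def:Clifford+kT_gates}, the product $UC = C_k T^{(q_k)} \cdots C_1 T^{(q_1)} (C_0 C)$ has the same form with $C_0$ replaced by the Clifford gate $C_0 C$ (using that Clifford gates form a group). The same applies to $CU$ by folding $C$ into $C_k$ on the left. Hence if $UC$ is strict Clifford+$k'T$, then $k' \le k$.

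For the opposite inequality, the key step is to invoke the same argument in reverse. Since Clifford gates form a group, $C^{-1}$ is also Clifford, and $U = (UC)C^{-1}$ is therefore Clifford+$k'T$ by the same folding trick used above. Because $U$ is by assumption a \emph{strict} Clifford+$kT$ gate, this forces $k \le k'$. Combining the two bounds gives $k' = k$, so $UC$ is a strict Clifford+$kT$ gate. The argument for $CU$ is identical, with the folding done on the leftmost Clifford block.

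There is no real obstacle here; the only thing to be careful about is that ``strict Clifford+$kT$'' is defined via the \emph{minimum} over decompositions, so the argument must compare minima rather than specific decompositions, and must use the fact that Clifford gates are closed under inversion. Both are immediate, so the proof is essentially a two-line repetition of the proof of Proposition \ref{prop:hermite_conjugate}.
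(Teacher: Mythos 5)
Your proposal is correct and follows exactly the paper's argument: fold the Clifford gate into the adjacent Clifford block to get $k' \leq k$, then apply the same trick to $U = (UC)C^\dagger$ (resp. $C^\dagger(CU)$) and use strictness of $U$ to get $k \leq k'$. The extra detail you give about how the folding works is just an elaboration of what the paper leaves implicit.
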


\begin{proof}
	Suppose $UC$ (resp. $CU$) is a strict Clifford+$k'T$ gate.
	From Definition \ref{def:Clifford+kT_gates}, it follows that $k' \leq k$.
	The same is true for $U = (UC)C^\dagger$ (resp. $C^\dagger(CU)$), which implies $k \leq k'$.
\end{proof}

\begin{proposition}\label{prop:adding_T}
	For any $n$-qubit strict Clifford+$kT$ gate $U$,
	a gate $T^{(q_i)}U$ is
	\begin{itemize}
	\item a strict Clifford$+(k+1)T$ gate,
	\item a strict Clifford+$kT$ gate, or
	\item a strict Clifford$+(k-1)T$ gate.
	\end{itemize}
	The same is true for a gate $UT^{(q_i)}$.
\end{proposition}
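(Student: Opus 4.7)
The plan is to bound the strictness index $k'$ of $T^{(q_i)}U$ from above and below. From Definition \ref{def:Clifford+kT_gates}, writing $T^{(q_i)}U$ simply prepends a single $T$ gate (and a trivial Clifford $I$) to a decomposition of $U$ that uses $k$ $T$ gates, so $T^{(q_i)}U$ is automatically a Clifford+$(k+1)T$ gate. Hence if it is strict Clifford+$k'T$, we have $k' \leq k+1$.

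For the lower bound, the key observation is that $T^\dagger$ itself is a Clifford+$1T$ gate, which follows from the identity $T^2 = S$ (a Clifford gate): rearranging gives $T^\dagger = S^\dagger T$, which fits Definition \ref{def:Clifford+kT_gates} with a single $T$ gate. Using this, if $T^{(q_i)}U$ is strict Clifford+$k'T$, then
\begin{equation*}
	U = (T^{(q_i)})^\dagger \cdot \bigl(T^{(q_i)}U\bigr)
\end{equation*}
is expressible as a Clifford+$(k'+1)T$ gate by concatenating the $k'$ $T$ gates from a decomposition of $T^{(q_i)}U$ with the single $T$ gate coming from $(T^{(q_i)})^\dagger = S^{(q_i)\dagger} T^{(q_i)}$ (with Cliffords absorbed into the adjacent $C_i$'s in the normal form of Definition \ref{def:Clifford+kT_gates}). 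Since $U$ is strict Clifford+$kT$, this forces $k \leq k'+1$, i.e., $k' \geq k-1$.

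Combining the two bounds, $k' \in \{k-1, k, k+1\}$, proving the claim for $T^{(q_i)}U$. The argument for $UT^{(q_i)}$ is identical after swapping left and right multiplication (or, alternatively, by applying the above to $U^\dagger$ together with Proposition \ref{prop:hermite_conjugate}, which guarantees that Hermitian conjugation preserves the strictness index).

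The only non-routine step is the lower bound, and within it the observation that $T^\dagger$ costs only one $T$ gate rather than the naive three (from $T^\dagger = T^{-1} = T^7 = S^\dagger S^\dagger S^\dagger T$ say); once $T^2 = S$ is spotted, everything else is a bookkeeping exercise of collapsing adjacent Cliffords in the decomposition. I do not expect any genuine obstacle beyond being careful that the concatenation indeed fits Definition \ref{def:Clifford+kT_gates}.
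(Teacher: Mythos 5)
Your proof is correct and takes essentially the same route as the paper: the upper bound $k' \leq k+1$ is immediate from Definition \ref{def:Clifford+kT_gates}, and the lower bound rests on the identity $T^2 = S$ — the paper applies one more $T^{(q_i)}$ to $U'$ to get $S^{(q_i)}U$ (strict Clifford+$kT$ by Proposition \ref{prop:adding_Clifford}), while you equivalently write $T^\dagger = S^\dagger T$ and recover $U$ as a Clifford+$(k'+1)T$ gate, giving the same conclusion $k \leq k'+1$.
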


\begin{proof}
	Suppose $U' = T^{(q_i)}U$ (resp. $UT^{(q_i)}$) is a strict Clifford+$k'T$ gate.
	From Definition~\ref{def:Clifford+kT_gates}, it follows that $k' \leq k+1$.
	The same argument applied to $S^{(q_i)}U = T^{(q_i)}U'$ (resp. $US^{(q_i)}$) which is a strict Clifford+$kT$ gate (from Proposition~\ref{prop:adding_Clifford}) implies $k \leq k'+1$.
\end{proof}

\noindent
In fact, the second possibility in Proposition~\ref{prop:adding_T} is ruled out (at least) for $n=1$:
\begin{proposition}\label{prop:adding_T_one_qubit}
	For any one-qubit strict Clifford+$kT$ gate $U$,
	a gate $TU$ is either
	\begin{itemize}
	\item a strict Clifford$+(k+1)T$ gate or
	\item a strict Clifford$+(k-1)T$ gate.
	\end{itemize}
	The same is true for a gate $UT$.
\end{proposition}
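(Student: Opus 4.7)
The plan is to refine Proposition~\ref{prop:adding_T}, which already restricts the T-count of $TU$ to $\{k-1,k,k+1\}$, by invoking uniqueness of the Matsumoto--Amano normal form \eqref{eq:Matsumoto-Amano_normal_form} to exclude the middle value $k$. The case $k=0$ is immediate: ``strict Clifford$+(-1)T$'' being vacuous, Proposition~\ref{prop:adding_T} already forces $TU$ to be strict Clifford$+1T$. So I restrict to $k\ge 1$ and split on the first syllable of $U$'s normal form.

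If $U = HT\cdot w\cdot C$ or $U = SHT\cdot w\cdot C$ with $w\in\{HT,SHT\}^{k-1}$, then $TU$ equals $T\cdot HT\cdot w\cdot C$ or $T\cdot SHT\cdot w\cdot C$, each already in the normal form $T\,(HT|SHT)\{k\}\,C$ of a strict Clifford$+(k+1)T$ gate. If instead $U = T\cdot w\cdot C$, then $T^{2}=S$ gives $TU = S\cdot w\cdot C$, and I inspect the first syllable of $w$. When $w$ is empty ($k=1$) this is a Clifford, T-count $0=k-1$. When $w=HT\cdot w'$ it reads $SHT\cdot w'\cdot C$, again in normal form and of T-count $k-1$. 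When $w=SHT\cdot w'$, the identity $S\cdot SHT=S^{2}HT=Z\cdot HT$ gives $TU=Z\cdot(HT\cdot w'\cdot C)$, where $HT\cdot w'\cdot C$ is the normal form of a strict Clifford$+(k-1)T$ gate and Proposition~\ref{prop:adding_Clifford} absorbs the leading Clifford $Z$ without changing strictness. In every subcase $TU$ is strict Clifford$+(k\pm 1)T$.

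The right-multiplication claim follows by dualization. Using $T^{\dagger}=S^{\dagger}T$, one has $(UT)^{\dagger}=T^{\dagger}U^{\dagger}=S^{\dagger}(TU^{\dagger})$; Proposition~\ref{prop:hermite_conjugate} makes $U^{\dagger}$ strict Clifford$+kT$, the left case just proved gives $TU^{\dagger}$ strict Clifford$+(k\pm 1)T$, Proposition~\ref{prop:adding_Clifford} clears the residual $S^{\dagger}$, and Proposition~\ref{prop:hermite_conjugate} transports the conclusion back to $UT$. The only delicate point is the subcase $w=SHT\cdot w'$, where $TU$ does not present itself as a normal form directly; the trick is to peel off the Clifford $Z$ on the left so that Proposition~\ref{prop:adding_Clifford} finishes the classification, and uniqueness of the Matsumoto--Amano normal form enters only implicitly, through the well-definedness of the T-count underlying Proposition~\ref{prop:adding_T}.
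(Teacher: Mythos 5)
Your proof follows essentially the same route as the paper: the paper likewise settles the $TU$ case by direct inspection of the Matsumoto--Amano normal form (you simply spell out the syllable-by-syllable case analysis it leaves implicit, which is correct, including the $S\cdot SHT = Z\cdot HT$ subcase handled via Proposition~\ref{prop:adding_Clifford}), and then obtains the $UT$ case from Propositions~\ref{prop:hermite_conjugate} and~\ref{prop:adding_Clifford} by conjugation, your $(UT)^\dagger = S^\dagger(TU^\dagger)$ being the same manipulation as the paper's $(TU^\dagger)^\dagger = UTS^\dagger$. The one slip is your $k=0$ case: Proposition~\ref{prop:adding_T} with the vacuous $k-1$ option removed still leaves open the possibility that $TU$ is Clifford (strict Clifford$+0T$), so it does not by itself force $TU$ to be strict Clifford$+1T$; the repair is immediate, either by noting that $TU$ Clifford would make $T=(TU)U^\dagger$ Clifford, or by observing that for Clifford $U=C$ the word $TC$ is already a Matsumoto--Amano normal form with one $T$, so your normal-form argument covers $k=0$ uniformly.
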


\begin{proof}
	The $U'= TU$ case is directly verifiable from the Matsumoto-Amano normal form.
	Combining with Proposition~\ref{prop:hermite_conjugate},
	the same is true for both $TU^\dagger$ and $(TU^\dagger)^{\dagger} = UTS^\dagger$,
	and Proposition~\ref{prop:adding_Clifford} further confirms that the same is true for $UT$.
\end{proof}

\newpage

We then define our main topic of study, Clifford+$kT$ states.
\begin{definition}[Clifford+$kT$ states]\label{def:Clifford+kT_states}
	An $n$-qubit state $\ket{\psi}$ is called a Clifford+$kT$ state if there exists a Clifford+$kT$ gate $U$ such that
	\begin{equation*}
		\ket{\psi} = U \ket{0}^{\otimes n}.
	\end{equation*}
\end{definition}
\noindent
Let us call $\ket{\psi}$ a \emph{strict} Clifford+$kT$ state if $\ket{\psi}$ is not a Clifford+$k'T$ state for any $k' < k$.
We let $\mathcal{C}_{n}^{+(\leq k)T}$ (resp. $\mathcal{C}_{n}^{+kT}$) denote the set of $n$-qubit non-strict (resp. strict) Clifford+$kT$ states: 
\begin{equation*}
	\mathcal{C}_{n}^{+kT}
	=
	\mathcal{C}_{n}^{+(\leq k)T}\ \big\backslash \bigcup_{0\leq k'\leq k-1}\mathcal{C}_{n}^{+(\leq k')T}.
\end{equation*}
For example, the set of stabilizer states is equal to $\mathcal{C}_{n}^{+0T}$,
while the magic state $\ket{H}$ belongs to $\mathcal{C}_{1}^{+1T}$.

A bit tricky point is that, even if $U$ is a \emph{strict} Clifford+$kT$ gate,
$U\ket{0}^{\otimes n}$ is not guaranteed to be a \emph{strict} Clifford+$kT$ state,
as can be seen from a simple example of $T\ket{0} = \ket{0}$.
To examine this issue more carefully,
let $\ket{\psi} = U\ket{0}^{\otimes n}$ be a strict Clifford+$k'T$ state (with $k' \leq k$ by definition) which can be written as $U'\ket{0}^{\otimes n}$
for some strict Clifford+$k'T$ gate $U'$.
For generic $n$, the condition which must hold is 
\begin{equation*}
	U'^\dagger U\ket{0}^{\otimes n} = \ket{0}^{\otimes n}.
\end{equation*}
In particular, for the simplest case of $n=1$ this is equivalent to
\begin{equation*}
	U'^\dagger U
	=
	\left(
 		\begin{array}{wc{16pt}wc{20pt}}
 			1 & 0\\
 			0 & e^{i\phi}
 		\end{array}
 	\right)
\end{equation*}
for some $\phi \in \mathbb{R}$.
Fortunately, it is known that the possible value of $\phi$ is limited to integer multiples of $\frac{\pi}{4}$~\cite[Appendix A]{KliuchnikovMaslovMosca:2012},
meaning that $U'^\dagger U = T^m$ $(m \in \mathbb{Z}/8\mathbb{Z})$.

Based on Definition~\ref{def:Clifford+kT_states},
one can define the notion of Clifford$+\Delta kT$ channels:
\begin{definition}[Clifford$+\Delta kT$ channels]
	A quantum channel $\Phi$ is called a Clifford$+\Delta kT$ channel
	if it sends any (pure) strict Clifford+$kT$ state to a (in general mixed and non-strict) Clifford$+(k+\Delta k)T$ state.
\end{definition}
\noindent
Note that (as always) the input and output states need not be those of the same number of qubits. 

\newpage

%%%%%%%%%%%%%%%%%%%%%%%%%%%%%%%%%%
\section{Clifford+$kT$ robustness} \label{sec:CkT robustness}
%%%%%%%%%%%%%%%%%%%%%%%%%%%%%%%%%%

We now introduce the concept of Clifford+$kT$ robustness and discuss its properties.
This section constitutes our theoretical contribution of this study.

\subsection{Definition}
A quantum state with any kind of \emph{resource} typically loses its resource when it is mixed with a \emph{free} (\textit{i.e.} null-resource) state.
This motivates one to define \emph{robustness} of a state~$\rho$ 
as the minimum amount of free-state mixing to completely destroy the resource of $\rho$.
This can be formulated mathematically as
\begin{equation*}
	\min_{\sigma \in \mathcal{F}}
	\left\{
		t\in \mathbb{R}_{\geq 0} \,\bigg|\,\frac{\rho + t\sigma}{1+t} \in \mathcal{F}
	\right\}
\end{equation*}
where $\mathcal{F}$ is the set of all possible free states.
It turns out that this quantity serves as a useful measure in various resource theories~\cite{VidalTarrach:1998,chitambar2019}.

If free states with respect to the resource under consideration form a convex set
(with large enough dimension),
the robustness in fact has a simple expression as follows.
First, given such an $\mathcal{F}$, a generic state $\rho$ can always (not uniquely) be expanded into a \emph{pseudo-mixture} of free states $\{\sigma_i\}$ as
\begin{equation*}
	\rho = \sum_i
	c_i\sigma_i
	\quad
	\bigg(\sum_i c_i = 1\bigg).
\end{equation*}
Here, the prefix \emph{pseudo-} indicates that some of the $c_i$'s might be negative.
Rewriting this as
\begin{equation*}
	\rho +
	\sum_{c_i < 0}
	(-c_i) \sigma_i
	=
	\sum_{c_i > 0}
	c_i\sigma_i
\end{equation*}
where both $\sum_{c_i < 0} (-c_i) \sigma_i$ and $\sum_{c_i > 0}c_i\sigma_i$ are (non-normalized) mixed free states by definition,
the robustness of $\rho$ reduces to
\begin{equation*}
	\min_{\substack{\{c_i\} \text{ s.t.} \\ 
    \rho = \sum_i c_i \sigma_i}} 
	\left( \sum_{c_i < 0}(-c_i) \right).
\end{equation*}
Note that it is often useful to consider $\min_{\{c_i\}} \sum_i |c_i|$ instead, since
\begin{equation} \label{eq:1+2R}
	\begin{array}{lcl}
		\displaystyle
		\sum_i |c_i|
		& = &
		\displaystyle
		\sum_{c_i > 0}c_i
		+
		\sum_{c_i < 0}(-c_i)\\[16pt]
		& = &
		\displaystyle
		\underbrace{\sum_i c_i}_{=1}
		+
		2\sum_{c_i < 0}(-c_i)\\[16pt]
	\end{array}
\end{equation}
and thus it contains the same information as the original robustness.
Note that $\min_{\{c_i\}} \sum_i |c_i| \geq 1$ always holds.

Now, let us introduce the concept of \emph{Clifford+$kT$ robustness}, which quantifies the amount of non-Clifford-ness of a given quantum state.

\begin{definition}[Clifford+$kT$ robustness]\label{def:Clifford+kT_robustness}
	The Clifford+$kT$ robustness of a quantum state $\rho$ is defined as
	\begin{equation*}\label{eq:definition_robustness}
		R_k(\rho) = \min_{\{c_i\}}
		\bigg\{
			\sum_i |c_i|
			\ \bigg|\ 
			\rho = \sum_i c_i \ket{\psi_i}\!\bra{\psi_i}
			\ \Big(\ket{\psi_i} \in \mathcal{C}_{n}^{+(\leq k)T}, \sum_i c_i = 1\Big)
		\bigg\}
	\end{equation*}
	minimized over all possible decomposition of $\rho$ in terms of (non-strict) Clifford+$kT$ pseudo-mixture.
	We occasionally abuse the notation and let $R_k(\ket{\psi})$ denote $R_k(\ket{\psi}\!\bra{\psi})$ for pure states.
\end{definition}

\noindent
This is in fact a rather straightforward generalization (and can be regarded as a finer version) of the notion of \emph{robustness of magic (RoM)} \cite{HowardCampbell:2016} (which corresponds to the $k=0$ case).
In the following,
we discuss the implications of this measure for quantum computation.

%\newpage

%%%%%%%%%%%%%%%%%%%%%%%%%%%%%%%%%%%%%%%%%%
\subsection{Basic properties as a resource measure}
%%%%%%%%%%%%%%%%%%%%%%%%%%%%%%%%%%%%%%%%%%
First of all, the Clifford+$kT$ robustness defined above distinguishes Clifford+$kT$ (free) states from the other (resourceful) states:

\begin{theorem}[Faithful]\label{prop:faithful}
	For any quantum state $\rho$,
	$R_k(\rho) = 1$
	if and only if $\rho$ is a (possibly mixed) Clifford+$kT$ state,
	and $R_k(\rho) > 1$ otherwise.
\end{theorem}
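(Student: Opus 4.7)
The plan is to prove the two directions of the iff by exploiting the universal lower bound $R_k(\rho) \geq 1$ noted right after Eq.~\eqref{eq:1+2R}, and to then read off the ``otherwise'' clause as an immediate contrapositive.

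The easy direction is that any mixed Clifford+$kT$ state satisfies $R_k(\rho) = 1$. If $\rho = \sum_i p_i \ket{\psi_i}\!\bra{\psi_i}$ is a genuine convex mixture with $\ket{\psi_i} \in \mathcal{C}_n^{+(\leq k)T}$ and $p_i \geq 0$, then this is already a valid pseudo-mixture, attaining $\sum_i |p_i| = \sum_i p_i = 1$. Hence $R_k(\rho) \leq 1$, and combined with $R_k(\rho) \geq 1$ one has $R_k(\rho) = 1$.

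For the converse, suppose $R_k(\rho) = 1$. I first argue that the infimum in Definition~\ref{def:Clifford+kT_robustness} is in fact attained. Since there are only finitely many $n$-qubit Clifford gates and only finitely many sign/axis choices for the $k$ Pauli rotations appearing in Eq.~\eqref{eq:Clifford+kT_gate_seq_rot_normal_form} (all up to global phase), the set $\mathcal{C}_n^{+(\leq k)T}$ is finite; the minimization is therefore a finite-dimensional $\ell^1$ problem over a closed affine polytope. The feasible set is non-empty because the stabilizer states alone already span the space of Hermitian operators and $\mathcal{C}_n^{+0T} \subseteq \mathcal{C}_n^{+(\leq k)T}$. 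Let $\{c_i^\star\}$ attain the minimum; then $\sum_i |c_i^\star| = 1 = \sum_i c_i^\star$, which forces $c_i^\star \geq 0$ for every $i$ (any strictly negative entry would make the first sum strictly larger than the second by Eq.~\eqref{eq:1+2R}). Thus $\rho = \sum_i c_i^\star \ket{\psi_i}\!\bra{\psi_i}$ is a genuine convex combination of Clifford+$kT$ states, i.e.\ a (possibly mixed) Clifford+$kT$ state.

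The final clause ``$R_k(\rho) > 1$ otherwise'' is then immediate from the contrapositive of the two directions above combined with the universal bound $R_k(\rho) \geq 1$. The entire argument is essentially a direct unpacking of the definition and the trivial inequality $|c| \geq c$, and I do not foresee any substantive obstacle beyond the (routine) verification that the minimum is attained, which is granted here by the finiteness of $\mathcal{C}_n^{+(\leq k)T}$.
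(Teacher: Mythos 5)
Your proof is correct and follows essentially the same route as the paper's, which likewise reads the equivalence off Eq.~\eqref{eq:1+2R}: $R_k(\rho)=1$ exactly when an optimal pseudo-mixture has no negative coefficients, i.e.\ when $\rho$ is a genuine convex mixture of Clifford+$kT$ states. Your additional verification that the minimum is attained (via finiteness of $\mathcal{C}_n^{+(\leq k)T}$ and spanning by stabilizer states) is a sound elaboration of a point the paper leaves implicit.
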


\begin{proof}
 $R_k(\rho) > 1$ holds if and only if there are coefficients satisfying $c_i < 0$ in Eq.~\eqref{eq:1+2R}.
\end{proof}

\noindent
Furthermore, it captures the reduction of resource under free-state mixing in the following sense:
\begin{theorem}[Monotone]\label{prop:monotone}
	For any quantum state $\rho$ and any trace-preserving Clifford$+\Delta kT$ channel $\Phi$,
	\begin{equation*}
		R_{k+\Delta k}(\Phi(\rho))
		\leq
		R_k(\rho).
	\end{equation*}
\end{theorem}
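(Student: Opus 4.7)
The plan is to follow the standard ``data-processing'' argument for robustness measures: pull back an optimal pseudo-mixture decomposition of $\rho$ through $\Phi$, use linearity, and then re-expand the images of the pure free states into pure free states at the larger $T$-count.

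First I would pick an optimal decomposition achieving the minimum in Definition~\ref{def:Clifford+kT_robustness}, writing
\begin{equation*}
    \rho \;=\; \sum_i c_i \,\ket{\psi_i}\!\bra{\psi_i}, \qquad \ket{\psi_i}\in\mathcal{C}_{n}^{+(\leq k)T}, \qquad \sum_i c_i = 1, \qquad \sum_i |c_i| = R_k(\rho).
\end{equation*}
By linearity of $\Phi$, this gives $\Phi(\rho) = \sum_i c_i\,\Phi(\ket{\psi_i}\!\bra{\psi_i})$. Next, since each $\ket{\psi_i}$ is a (possibly non-strict) Clifford+$(\leq k)T$ state, it is a \emph{strict} Clifford+$k'_iT$ state for some $k'_i\leq k$, and by the definition of a Clifford$+\Delta kT$ channel each $\Phi(\ket{\psi_i}\!\bra{\psi_i})$ is a (mixed, non-strict) Clifford$+(\leq k'_i+\Delta k)T \subseteq$ Clifford$+(\leq k+\Delta k)T$ state. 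By definition of the mixed free set (the convex hull of pure free states), each such output admits a convex decomposition
\begin{equation*}
    \Phi(\ket{\psi_i}\!\bra{\psi_i}) \;=\; \sum_{j} p_{ij}\,\ket{\phi_{ij}}\!\bra{\phi_{ij}}, \qquad \ket{\phi_{ij}}\in \mathcal{C}_{n}^{+(\leq k+\Delta k)T}, \qquad p_{ij}\geq 0, \qquad \sum_j p_{ij}=1,
\end{equation*}
where $\sum_j p_{ij} = 1$ uses the trace-preserving property of $\Phi$.

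Substituting, I obtain the pseudo-mixture
\begin{equation*}
    \Phi(\rho) \;=\; \sum_{i,j} (c_i\,p_{ij})\,\ket{\phi_{ij}}\!\bra{\phi_{ij}},
\end{equation*}
whose coefficients sum to $\sum_{i,j} c_i p_{ij} = \sum_i c_i = 1$, so it is a valid Clifford+$(\leq k+\Delta k)T$ pseudo-mixture of $\Phi(\rho)$. Its $\ell_1$ norm is
\begin{equation*}
    \sum_{i,j} |c_i p_{ij}| \;=\; \sum_i |c_i| \sum_j p_{ij} \;=\; \sum_i |c_i| \;=\; R_k(\rho),
\end{equation*}
and invoking the minimization in Definition~\ref{def:Clifford+kT_robustness} for $\Phi(\rho)$ immediately yields $R_{k+\Delta k}(\Phi(\rho))\leq R_k(\rho)$.

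The only conceptual obstacle is the bookkeeping about ``strict'' versus ``non-strict'' in the definition of a Clifford$+\Delta k T$ channel: the definition literally specifies only the action on strict Clifford+$kT$ inputs, so I need the natural observation that if strict Clifford+$k'T$ pure inputs (for any $k'\leq k$) land in the Clifford+$(\leq k'+\Delta k)T$ convex hull, then in particular they land in the Clifford+$(\leq k+\Delta k)T$ convex hull, which is all that is used above. Everything else is routine linearity and triangle-inequality bookkeeping, with no further subtleties coming from dimension changes between input and output Hilbert spaces.
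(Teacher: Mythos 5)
Your proof is correct and follows essentially the same route as the paper's: take the optimal Clifford+$kT$ pseudo-mixture of $\rho$, push it through $\Phi$ by linearity, expand each image as a convex (trace-preserving, hence unit-weight) mixture of Clifford+$(\leq k+\Delta k)T$ pure states, and bound the resulting $\ell_1$ norm by $\sum_i |c_i| = R_k(\rho)$. Your explicit handling of the strict-versus-non-strict bookkeeping is a slightly more careful reading of the channel definition than the paper gives, but it does not change the argument.
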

	
\begin{proof}
	Let $\Phi$ be such that $\Phi(\ket{\psi_i}\!\bra{\psi_i}) = \sum_{\ket{\phi_j}\,\in\,\mathcal{C}_{n'}^{+(\leq k+\Delta k)T}} \tilde{c}_{ij} \ket{\phi_j}\!\bra{\phi_j}$
	for a strict Clifford+$kT$ state $\ket{\psi_i} \in \mathcal{C}_{n}^{+kT}$.
	Here, $\tilde{c}_{ij} \geq 0$ and the trace-preserving condition requires $\sum_j \tilde{c}_{ij} = 1$.
	For the Clifford+$kT$ pseudo-mixture expansion
	$\rho = \sum_{\ket{\psi_i}\,\in\,\mathcal{C}_{n}^{+(\leq k)T}} c_i \ket{\psi_i}\!\bra{\psi_i}$
	attaining the minimum in the definition of Clifford+$kT$ robustness, one has
	\begin{align*}
		R_{k+\Delta k}(\Phi(\rho))
		& =
		R_{k+\Delta k}\Big(
			\sum_{\ket{\psi_i}\,\in\,\mathcal{C}_{n}^{+(\leq k)T}} c_i
			\sum_{\ket{\phi_j}\,\in\,\mathcal{C}_{n'}^{+(\leq k+\Delta k)T}}
			\tilde{c}_{ij} \ket{\phi_j}\!\bra{\phi_j}
		\Big)\\
		& \leq \sum_j \Big| \sum_i c_i\,\tilde{c}_{ij} \,\Big|\\
		& \leq \sum_i |c_i| \underbrace{
			\sum_j |\tilde{c}_{ij}|
		}_{=1}. \qedhere
	\end{align*}
\end{proof}

In addition to these two fundamental properties, the Clifford+$kT$ robustness has several nice properties:

\begin{theorem}[Convex]\label{prop:convex}
	For any quantum states $\{\rho_i\}$,
	\begin{equation*}
		R_k\big(\sum_i p_i \rho_i\big)
		\leq
		\sum_i |p_i| \cdot R_k(\rho_i).
	\end{equation*}
\end{theorem}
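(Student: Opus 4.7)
The plan is straightforward: take optimal Clifford+$kT$ pseudo-mixture decompositions for each $\rho_i$ individually, combine them linearly with weights $p_i$, and check that the resulting expression still meets the constraints in Definition~\ref{def:Clifford+kT_robustness}.

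Concretely, I would first invoke Definition~\ref{def:Clifford+kT_robustness} separately for each $\rho_i$ to obtain coefficients $\{c_{ij}\}_j$ and Clifford+$kT$ states $\{\ket{\psi_{ij}}\}_j \subset \mathcal{C}_{n}^{+(\leq k)T}$ attaining the minimum, so that $\rho_i = \sum_j c_{ij} \ket{\psi_{ij}}\!\bra{\psi_{ij}}$ with $\sum_j c_{ij} = 1$ and $\sum_j |c_{ij}| = R_k(\rho_i)$. Substituting these into the convex combination yields
\[
\sum_i p_i \rho_i \;=\; \sum_{i,j} (p_i c_{ij}) \ket{\psi_{ij}}\!\bra{\psi_{ij}},
\]
which is itself an expansion of the target state as a linear combination of Clifford+$kT$ states. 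Next I would verify the normalization condition: $\sum_{i,j} p_i c_{ij} = \sum_i p_i \cdot 1 = 1$ provided $\sum_i p_i = 1$ (the natural hypothesis ensuring that the LHS has unit trace). Hence this expansion is a valid Clifford+$kT$ pseudo-mixture decomposition in the sense of Definition~\ref{def:Clifford+kT_robustness}.

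Finally, appealing to the defining minimality gives
\[
R_k\Big(\sum_i p_i \rho_i\Big) \;\leq\; \sum_{i,j} |p_i c_{ij}| \;=\; \sum_i |p_i| \sum_j |c_{ij}| \;=\; \sum_i |p_i|\,R_k(\rho_i),
\]
which is the desired inequality. I do not expect any genuine obstacle here: this is the standard convexity argument for any $\ell_1$-type robustness measure built from optimization over pseudo-mixtures of a fixed free-state set, closely mirroring the structure of the monotonicity proof in Theorem~\ref{prop:monotone}. The only bookkeeping point is the normalization constraint, which is handled by the elementary computation above.
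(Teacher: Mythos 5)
Your proof is correct and follows essentially the same route as the paper: substitute the optimal Clifford+$kT$ pseudo-mixture of each $\rho_i$ into $\sum_i p_i\rho_i$ and bound the robustness by the $\ell_1$-norm of the combined coefficients. Your explicit check of the normalization $\sum_{i,j} p_i c_{ij}=1$ (under $\sum_i p_i=1$) is a minor bookkeeping point the paper leaves implicit, but otherwise the arguments coincide.
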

	
\begin{proof}
	For Clifford+$kT$ pseudo-mixture expansions
	$
		\rho_i
		=
		\sum_{\ket{\psi_{i,j}}\,\in\,\mathcal{C}_{n}^{+(\leq k)T}} c_{i,j} \ket{\psi_{i,j}}\!\bra{\psi_{i,j}}
	$
	attaining the minimum in the definition of Clifford+$kT$ robustness, one has
	\begin{align*}
		R_k\big(\sum_i p_i \rho_i\big)
		& = R_k\big(\sum_i\sum_j p_i c_{i,j} \ket{\psi_{i,j}}\!\bra{\psi_{i,j}} \big)\\
		& \leq \sum_i\sum_j |p_i c_{i,j}|\\
		& = \sum_i \sum_j |p_i|\cdot |c_{i,j}|\\
		& = \sum_i  |p_i| \cdot \sum_j |c_{i,j}|.\qedhere
	\end{align*}
\end{proof}

\begin{theorem}[Sub-multiplicative]\label{prop:sub-multiplicative}
	For any two quantum states $\rho$ and $\rho'$,
	\begin{equation*}
		R_{k+k'}(\rho \otimes \rho')
		\leq
		R_{k}(\rho)\cdot R_{k'}(\rho').
	\end{equation*}
\end{theorem}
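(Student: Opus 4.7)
The plan is to obtain the bound by taking the tensor product of optimal pseudo-mixture decompositions of $\rho$ and $\rho'$. Concretely, I would start from expansions
\begin{equation*}
\rho = \sum_i c_i \ket{\psi_i}\!\bra{\psi_i}, \qquad
\rho' = \sum_j c'_j \ket{\psi'_j}\!\bra{\psi'_j},
\end{equation*}
with $\ket{\psi_i} \in \mathcal{C}_{n}^{+(\leq k)T}$ and $\ket{\psi'_j} \in \mathcal{C}_{n'}^{+(\leq k')T}$, attaining the minima in Definition \ref{def:Clifford+kT_robustness} so that $\sum_i |c_i| = R_k(\rho)$ and $\sum_j |c'_j| = R_{k'}(\rho')$. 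Their tensor product gives
\begin{equation*}
\rho \otimes \rho' = \sum_{i,j} c_i c'_j\, \ket{\psi_i}\!\bra{\psi_i} \otimes \ket{\psi'_j}\!\bra{\psi'_j},
\end{equation*}
and the normalization $\sum_{i,j} c_i c'_j = (\sum_i c_i)(\sum_j c'_j) = 1$ is immediate.

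The next step is to verify that each product state $\ket{\psi_i}\otimes \ket{\psi'_j}$ is a Clifford+$(k+k')T$ state on $n+n'$ qubits. If $\ket{\psi_i} = U_i \ket{0}^{\otimes n}$ with $U_i$ a Clifford+$(\leq k)T$ gate and $\ket{\psi'_j} = U'_j \ket{0}^{\otimes n'}$ with $U'_j$ a Clifford+$(\leq k')T$ gate, then
\begin{equation*}
\ket{\psi_i}\otimes \ket{\psi'_j} = (U_i \otimes U'_j)\, \ket{0}^{\otimes (n+n')},
\end{equation*}
and $U_i \otimes U'_j = (U_i \otimes I)(I \otimes U'_j)$ is a product of Clifford gates and at most $k+k'$ single-qubit $T$ gates (acting on disjoint qubit subsets), hence itself a Clifford+$(\leq k+k')T$ gate in the sense of Definition \ref{def:Clifford+kT_gates}. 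Thus the above gives a valid Clifford+$(k+k')T$ pseudo-mixture of $\rho\otimes\rho'$.

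Finally, by the definition of $R_{k+k'}$ as a minimum,
\begin{equation*}
R_{k+k'}(\rho\otimes\rho') \leq \sum_{i,j} |c_i c'_j| = \Big(\sum_i |c_i|\Big)\Big(\sum_j |c'_j|\Big) = R_k(\rho)\cdot R_{k'}(\rho'),
\end{equation*}
which is the desired inequality. There is essentially no hard step here; the only point to be careful about is the verification that tensor products respect the Clifford+$kT$ structure with additive $T$-counts, which follows directly from Definition \ref{def:Clifford+kT_gates} since the $T$ gates on the two factors act on disjoint qubits and the tensor product of Clifford gates is again a Clifford gate.
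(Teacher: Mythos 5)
Your proposal is correct and follows essentially the same route as the paper's proof: tensor the optimal Clifford+$kT$ and Clifford+$k'T$ pseudo-mixtures, observe that $\ket{\psi_i}\otimes\ket{\psi'_j} \in \mathcal{C}_{n+n'}^{+(\leq k+k')T}$, and bound $R_{k+k'}(\rho\otimes\rho')$ by $\sum_{i,j}|c_i c'_j| = R_k(\rho)\cdot R_{k'}(\rho')$. Your explicit check that $U_i\otimes U'_j$ is a Clifford+$(\leq k+k')T$ gate and that the coefficients sum to one merely spells out what the paper leaves implicit.
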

	
\begin{proof}
	For a Clifford+$kT$ pseudo-mixture expansion
	$\rho = \sum_{\ket{\psi_i}\,\in\,\mathcal{C}_{n}^{+(\leq k)T}} c_i \ket{\psi_i}\!\bra{\psi_i}$
	and a Clifford+$k'T$ pseudo-mixture expansion
	$\rho' = \sum_{\ket{\phi_j}\,\in\,\mathcal{C}_{n'}^{+(\leq k')T}} c'_j \ket{\phi_j}\!\bra{\phi_j}$
	both attaining the minimum in the definition of Clifford+$kT$ robustness,
	the state is given by
	\begin{equation*}
		\rho\otimes \rho'
		=
		\sum_i \sum_j c_i c'_j
		\ket{\psi_i}\!\bra{\psi_i}
		\otimes
		\ket{\phi_j}\!\bra{\phi_j},
	\end{equation*}
	and since 
	$\ket{\psi_i}
	\otimes
	\ket{\phi_j} \in \mathcal{C}_{n+n'}^{+(\leq k+k')T}$,
	the Clifford$+(k+k')T$ robustness of $\rho\otimes \rho'$ is upper-bounded as
	\begin{align*}
		R_{k+k'}(\rho\otimes \rho')
		& \leq
		\sum_i \sum_j |c_i c'_j|\\
		& =
		\sum_i \sum_j |c_i| \cdot |c'_j|\\
		& =
		\sum_i  |c_i| \cdot \sum_j |c'_j|.\qedhere
	\end{align*}
\end{proof}

%\newpage

%%%%%%%%%%%%%%%%%%%%%%%%%%%%%%%%%%%%%%%%%%
\subsection{Lower bound} \label{subsec:lower bound}
%%%%%%%%%%%%%%%%%%%%%%%%%%%%%%%%%%%%%%%%%%
Here we derive a lower bound of Clifford+$kT$ robustness.
For an $n$-qubit quantum state~$\rho$ and its Clifford+$kT$ pseudo-mixture expansion
$\rho= \sum_{\ket{\psi_i}\,\in\,\mathcal{C}_{n}^{+(\leq k)T}} c_i \ket{\psi_i}\!\bra{\psi_i}$, one has
\begin{equation*}
	\Tr(P_a\rho)
	=
	\sum_i c_i
	\Tr(P_a\ket{\psi_i}\!\bra{\psi_i})
\end{equation*}
where $P_a$ $(a=1,\cdots,4^n)$ is an $n$-qubit Pauli operator.
Letting $A_{ai} = \Tr(P_a\ket{\psi_i}\!\bra{\psi_i})$, $x_i = c_i$, and $b_a = \Tr(P_a\rho)$,
the minimization in Definition \ref{def:Clifford+kT_robustness} of Clifford+$kT$ robustness can be rewritten as a basis pursuit problem
\begin{equation*}
	\min_{\bm{x}}
	\Big\{
		\|\bm{x}\|_1
		\, \Big|\, 
		A\bm{x} = \bm{b}
	\Big\}.
\end{equation*}
As is well known, this can be turned into a linear program (LP) by further decomposing $\bm{x} = \bm{x}^+ - \bm{x}^-$
such that $\bm{x}^+, \bm{x}^- \geq \bm{0}$ as
\begin{equation*}
	\min_{
		{\tiny
			\begin{pmatrix}
				\bm{x}^+\\
				\bm{x}^-\\
			\end{pmatrix}
		}
		\geq
		\bm{0}
	}
	\Big\{
		\left(
			\begin{array}{ccc}
				1 & \cdots & 1
			\end{array}
		\right)
		\left(
			\begin{array}{c}
				\bm{x}^+\\
				\bm{x}^-\\
			\end{array}
		\right)
		\, \Big|\, 
		\left(
			\begin{array}{cc}
				A & \ -A
			\end{array}
		\right)
		\left(
			\begin{array}{c}
				\bm{x}^+\\
				\bm{x}^-\\
			\end{array}
		\right) = \bm{b}
	\Big\}.
\end{equation*}
The dual LP is given by
\begin{equation*}
	\max_{\bm{y}}
	\Big\{
		\bm{b}^{\mathsf{T}}\bm{y}
		\,\Big|\,
		\left(
			\begin{array}{c}
				A^{\mathsf{T}}\\
				-A^{\mathsf{T}}
			\end{array}
		\right)\bm{y}
		\leq
		\left(
			\begin{array}{c}
				1\\
				\vdots\\
				1
			\end{array}
		\right)
	\Big\}
\end{equation*}
and writing $Y = \sum_a y_a P_a$ this is further equivalent to
\begin{equation}\label{eq:basis pursuit}
	\max_{Y}
	\Big\{
		\Tr(Y\rho)
		\,\Big|\,
		-\!1 \leq \Tr(Y\ket{\psi_i}\!\bra{\psi_i}) \leq 1
		\ \ 
		\forall\ket{\psi_i}\in \mathcal{C}_{n}^{+(\leq k)T}
	\Big\}.
\end{equation}

\newpage

In fact, one can use this to derive a lower bound of the Clifford+$kT$ robustness $R_k(\rho)$
as follows.
Let us begin with showing the following lemma:
\begin{lemma}
 For any Clifford+$kT$ state $\ket{\psi_i}$,
	\begin{equation*}
		\sum_{a=1}^{4^n} \big|\Tr(P_a\ket{\psi_i}\!\bra{\psi_i}) \big|
		\leq
		2^n (\sqrt{2})^k.
	\end{equation*}
\end{lemma}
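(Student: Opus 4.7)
The plan is to proceed by induction on $k$, exploiting the normal form in Eq.~\eqref{eq:Clifford+kT_gate_seq_rot_normal_form}, which lets me peel off one Pauli rotation of angle $\pm\pi/4$ at a time.

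For the base case $k=0$, any stabilizer state $\ket{\psi}$ is stabilized by exactly $2^n$ Pauli operators (its stabilizer group), and for those $|\Tr(P_a\ket{\psi}\!\bra{\psi})| = 1$, while for the remaining $4^n - 2^n$ Paulis the trace vanishes. Hence the sum equals exactly $2^n = 2^n (\sqrt{2})^0$.

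For the inductive step, fix $k \geq 1$ and assume the bound holds for $k-1$. By Eq.~\eqref{eq:Clifford+kT_gate_seq_rot_normal_form}, any Clifford+$kT$ state can be written (up to a global phase) as $\ket{\psi} = R_P(\varepsilon\tfrac{\pi}{4})\ket{\phi}$ with $\varepsilon \in \{+1,-1\}$, where $P$ is an $n$-qubit Pauli operator and $\ket{\phi}$ is a Clifford+$(k-1)T$ state. For any Pauli $P_a$, a direct computation of $R_P(-\varepsilon\tfrac{\pi}{4})\, P_a\, R_P(\varepsilon\tfrac{\pi}{4})$ using Eq.~\eqref{eq:pauli-rotation-def} gives two cases:
\begin{equation*}
    R_P(-\varepsilon\tfrac{\pi}{4})\, P_a\, R_P(\varepsilon\tfrac{\pi}{4}) =
    \begin{cases}
        P_a & \text{if } [P,P_a] = 0,\\
        \tfrac{1}{\sqrt{2}}\!\left(P_a + \varepsilon\, i P P_a\right) & \text{if } \{P,P_a\} = 0,
    \end{cases}
\end{equation*}
so that $\Tr(P_a \ket{\psi}\!\bra{\psi}) = \Tr(P_a\ket{\phi}\!\bra{\phi})$ in the commuting case and $\Tr(P_a \ket{\psi}\!\bra{\psi}) = \tfrac{1}{\sqrt{2}}\bigl(\Tr(P_a\ket{\phi}\!\bra{\phi}) + \varepsilon\,\Tr(iPP_a\ket{\phi}\!\bra{\phi})\bigr)$ in the anticommuting case. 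Note that when $\{P,P_a\}=0$, the operator $iPP_a$ is Hermitian, squares to $I$, and anticommutes with $P$, so $iPP_a = \pm Q_a$ for some Pauli $Q_a$ that also anticommutes with $P$.

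The key combinatorial observation, which I expect to be the main point to handle carefully, is that the map $P_a \mapsto Q_a$ is an involution (up to sign) on the set $\{P_a : \{P,P_a\} = 0\}$, since applying it twice returns $\pm P_a$. Using the triangle inequality on the anticommuting terms and summing, this pairing yields
\begin{equation*}
    \sum_{a:\{P,P_a\}=0} |\Tr(P_a\ket{\psi}\!\bra{\psi})|
    \leq \tfrac{1}{\sqrt{2}} \cdot 2 \sum_{a:\{P,P_a\}=0} |\Tr(P_a\ket{\phi}\!\bra{\phi})|
    = \sqrt{2} \sum_{a:\{P,P_a\}=0} |\Tr(P_a\ket{\phi}\!\bra{\phi})|,
\end{equation*}
while the commuting part transfers with factor $1 \leq \sqrt{2}$. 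Combining,
\begin{equation*}
    \sum_{a=1}^{4^n} |\Tr(P_a \ket{\psi}\!\bra{\psi})| \leq \sqrt{2} \sum_{a=1}^{4^n} |\Tr(P_a \ket{\phi}\!\bra{\phi})| \leq \sqrt{2} \cdot 2^n (\sqrt{2})^{k-1} = 2^n (\sqrt{2})^k
\end{equation*}
by the inductive hypothesis applied to $\ket{\phi}$, completing the induction.
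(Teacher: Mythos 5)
Your proof is correct, and it takes a genuinely different route from the paper. The paper proves the lemma by expanding the $T$-gate channel as a pseudo-mixture of Clifford channels, $T\sigma T^\dagger = \tfrac{1}{2}I\sigma I + \tfrac{1}{\sqrt{2}}S\sigma S^\dagger + (\tfrac{1}{2}-\tfrac{1}{\sqrt{2}})Z\sigma Z^\dagger$ with $\ell_1$-weight $\sqrt{2}$, thereby writing the Clifford+$kT$ state as a pseudo-mixture of stabilizer states with total weight at most $(\sqrt{2})^k$ and invoking the fact that each stabilizer state has at most $2^n$ nonvanishing Pauli expectations; the $2^n(\sqrt{2})^k$ bound then follows by the triangle inequality. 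You instead induct directly on $k$ via the normal form of Eq.~\eqref{eq:Clifford+kT_gate_seq_rot_normal_form}, showing that a single $R_P(\pm\tfrac{\pi}{4})$ rotation increases the Pauli $\ell_1$-norm $\sum_a|\Tr(P_a\rho)|$ by at most a factor $\sqrt{2}$: commuting Paulis are untouched, while each anticommuting $P_a$ splits into $\tfrac{1}{\sqrt{2}}(P_a \pm Q_a)$ with $Q_a \propto iPP_a$, and the map $P_a \mapsto Q_a$ is a (sign-free) involution of the anticommuting set, so the doubled contribution is controlled — all the details (Hermiticity of $iPP_a$, the involution, the base case $2^n$ for stabilizer states) check out, and peeling off the last rotation indeed leaves a Clifford+$(k-1)T$ state since $R_{P}(\pm\tfrac{\pi}{4})$ is Clifford-conjugate to $T$ or $T^\dagger \sim$ Clifford$\cdot T$. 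Your argument is more elementary and self-contained (no channel decomposition needed) and immediately generalizes to Pauli rotations of arbitrary angle $\theta$ with factor $|\cos\theta|+|\sin\theta|$ per rotation, whereas the paper's argument generalizes in a different direction, to any non-Clifford gate whose channel admits a known Clifford pseudo-mixture with small $\ell_1$-weight.
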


\begin{proof}
	Given a decomposition of a $T$-gate channel into a mixed-Clifford channel for a state~$\sigma$ as
	\begin{equation*}
		T \sigma T^\dagger
		=
		\sum_{l} d_l\, D_l \,\sigma D_l^\dagger,
	\end{equation*}
    where $d_l$'s are coefficients and $D_l$'s are Clifford gates,
	a Clifford+$kT$ state $\ket{\psi_i}$ is decomposed (see Definitions \ref{def:Clifford+kT_gates} and \ref{def:Clifford+kT_states}) as
	\begin{align*}
		\ket{\psi_i}\!\bra{\psi_i}
		& =
		C_kT^{(q_k)} \cdots C_1 T^{(q_1)}C_0\ket{0}^{\otimes n}
		\bra{0}^{\otimes n}
		C_0^\dagger
		T^{(q_1)\dagger}
		C_1^\dagger
		\cdots
		T^{(q_k)\dagger}
		C_k^\dagger\\
		& =
		C_kT^{(q_k)} \cdots C_1 \bigg(
			\sum_{l_1} d_{l_1} D_{l_1} C_0\ket{0}^{\otimes n}
			\bra{0}^{\otimes n}
			C_0^\dagger
			D_{l_1}^{\dagger}
		\bigg)
		C_1^\dagger
		\cdots
		T^{(q_k)\dagger}
		C_k^\dagger\\[-6pt]
		& = \cdots\\[4pt]
		& = \sum_{l_1,\cdots, l_k} d_{l_1} \cdots d_{l_k}
		\underbrace{C_k D_{l_k} \cdots C_1 D_{l_1} C_0}_{\eqqcolon C_{l_1,\cdots, l_k}}
		\ket{0}^{\otimes n}\bra{0}^{\otimes n}
		\underbrace{C_0^\dagger D_{l_1}^{\dagger} C_1^\dagger \cdots D_{l_k}^{\dagger} C_k^\dagger}_{= C_{l_1,\cdots, l_k}^\dagger}
	\end{align*}
	and
	\begin{equation*}
		\sum_{a=1}^{4^n} \big|\Tr(P_a\ket{\psi_i}\!\bra{\psi_i}) \big|
		=
		\sum_{a=1}^{4^n}
		\Big|
			\sum_{l_1,\cdots,l_k} d_{l_1}\cdots d_{l_k}
			\Tr(
				C_{l_1,\cdots, l_k}^\dagger P_a C_{l_1,\cdots, l_k}
				\ket{0}^{\otimes n}\!\bra{0}^{\otimes n}
			)
		\Big|.
	\end{equation*}
	Noting that the trace on the right-hand side is $1$ if $C_{\{l_i\}}^\dagger P_a C_{\{l_i\}} \in \{I,Z\}^{\otimes n}$ and otherwise $0$,
	\begin{align*}
    \sum_{a=1}^{4^n} \big|\Tr(P_a\ket{\psi_i}\!\bra{\psi_i}) \big|
		& \leq
		\Big|
			\sum_{l_1,\cdots,l_k} d_{l_1}\cdots d_{l_k}
		\Big|
		\cdot 2^n\\
		& \leq
		\bigg(\sum_l |d_l|\bigg)^k \cdot 2^n.
	\end{align*}
	Here, a decomposition of the $T$-gate channel
	\begin{equation*}
		T\sigma T^\dagger = \frac{1}{2}I\sigma I^\dagger + \frac{1}{\sqrt{2}}S\sigma S^\dagger + \left(\frac{1}{2}-\frac{1}{\sqrt{2}}\right)Z \sigma Z^\dagger
	\end{equation*}
	gives $\sum_l |d_l| = \frac{1}{2} + \frac{1}{\sqrt{2}} + (\frac{1}{\sqrt{2}} - \frac{1}{2}) = \sqrt{2}$, and hence the lemma follows.
\end{proof}

% \newpage

Given the lemma,
a guess $y_a = \frac{1}{2^n(\sqrt{2})^k}\mathrm{sgn}(\Tr(P_a\rho))$ as in \cite[Appendix B]{HeinrichGross:2018} is feasible as
\begin{equation*}
	\begin{array}{lcl}
		\big|\Tr (Y\ket{\psi_i}\!\bra{\psi_i})\big|
		& = &
		\displaystyle
		\frac{1}{2^n(\sqrt{2})^k}\sum_a \big|\mathrm{sgn}(\Tr(P_a\rho)) \Tr(P_a\ket{\psi_i}\!\bra{\psi_i})\big|\\
		& \leq & 
		\displaystyle
		\frac{1}{2^n(\sqrt{2})^k}\sum_a \big|\Tr(P_a\ket{\psi_i}\!\bra{\psi_i})\big|\\
		& \leq &
		1,\\
	\end{array}
\end{equation*}
and the corresponding value of $\Tr (Y\rho)$ is
\begin{align*}
	\Tr (Y\rho)
	& = \Tr \bigg(
		\sum_a
		\frac{1}{2^n(\sqrt{2})^k}\mathrm{sgn}(\Tr(P_a\rho))
		P_a\rho
	\bigg)\\
	& = \frac{1}{2^n(\sqrt{2})^k} \sum_a \bigg(
		\mathrm{sgn}(\Tr(P_a\rho))
		\Tr (P_a\rho)
	\bigg).
\end{align*}

\noindent
This gives a lower bound of Clifford+$kT$ robustness:
\begin{theorem}[A lower bound of Clifford+$kT$ robustness]\label{theorem:lower_bound}
	\begin{equation*}
		R_k(\rho) \geq \frac{1}{2^n(\sqrt{2})^k}
		\sum_{a=1}^{4^n} \big|\Tr(P_a \rho)\big|.
	\end{equation*}
\end{theorem}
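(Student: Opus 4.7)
The plan is to use the LP duality setup that has already been assembled in the excerpt: the primal problem defining $R_k(\rho)$ is a basis pursuit over Pauli expectation vectors of Clifford+$kT$ states, and its dual is the maximization over operators $Y = \sum_a y_a P_a$ in Eq.~\eqref{eq:basis pursuit}. By weak duality, \emph{any} feasible $Y$ furnishes a lower bound $R_k(\rho) \geq \Tr(Y\rho)$, so it suffices to exhibit a good feasible choice.

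The construction I would use is exactly the ``sign-vector'' ansatz from \cite{HeinrichGross:2018}: set
\begin{equation*}
    y_a = \frac{1}{2^n (\sqrt{2})^k}\,\mathrm{sgn}\!\big(\Tr(P_a \rho)\big),
\end{equation*}
so that $\Tr(Y\rho) = \frac{1}{2^n (\sqrt{2})^k}\sum_a |\Tr(P_a\rho)|$, which is precisely the claimed lower bound. The only thing left is to verify the dual feasibility constraint $|\Tr(Y\ket{\psi_i}\!\bra{\psi_i})| \leq 1$ for every $\ket{\psi_i}\in \mathcal{C}_n^{+(\leq k)T}$. Using $|\mathrm{sgn}(\cdot)|\leq 1$ and the triangle inequality, this reduces to the bound
\begin{equation*}
    \sum_{a=1}^{4^n} \big|\Tr(P_a \ket{\psi_i}\!\bra{\psi_i})\big| \leq 2^n(\sqrt{2})^k,
\end{equation*}
which is exactly the content of the lemma proved just above the theorem.

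I would therefore organize the proof as three short steps: (i) recall the dual LP formulation \eqref{eq:basis pursuit} so that weak duality yields $R_k(\rho)\geq \Tr(Y\rho)$ for any feasible $Y$; (ii) plug in the sign-vector $Y$ defined above and compute $\Tr(Y\rho)$ directly; (iii) invoke the lemma to confirm dual feasibility. The only substantive input is the lemma itself, which has already been established via the mixed-Clifford decomposition $T\sigma T^\dagger = \tfrac12 I\sigma I + \tfrac{1}{\sqrt 2}S\sigma S^\dagger + (\tfrac12 - \tfrac{1}{\sqrt 2}) Z\sigma Z$ giving $\sum_l |d_l|=\sqrt 2$ per $T$-gate and a factor of $2^n$ from the stabilizer case.

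The main obstacle (conceptually) was the lemma, since it required absorbing the non-Cliffordness of each $T$ gate into a quasiprobability decomposition with controlled $\ell_1$-norm $\sqrt{2}$, and then iterating $k$ times through the Clifford conjugations. Once that is in hand, the theorem itself is essentially a one-line application of weak LP duality to the dual-feasible sign vector, so I do not anticipate any additional technical difficulty in writing it up.
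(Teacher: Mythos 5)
Your proposal is correct and follows essentially the same route as the paper: the paper likewise takes the dual LP formulation of Eq.~\eqref{eq:basis pursuit}, plugs in the sign-vector ansatz $y_a = \frac{1}{2^n(\sqrt{2})^k}\mathrm{sgn}(\Tr(P_a\rho))$ from \cite{HeinrichGross:2018}, and verifies dual feasibility via the lemma bounding $\sum_a |\Tr(P_a\ket{\psi_i}\!\bra{\psi_i})|$ by $2^n(\sqrt{2})^k$ through the $\ell_1$-norm-$\sqrt{2}$ mixed-Clifford decomposition of the $T$-gate channel. No gaps; nothing further is needed.
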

One interesting implication of this lower bound can be found by considering $n$-qubit tensor products of the magic states, $\ket{H}^{\otimes n}$ and $\ket{SH}^{\otimes n}$ [Eqs.~\eqref{eq:def of H state} and \eqref{eq:def of SH state}].
One has
\begin{equation*}
	\begin{array}{lclclcl}
		\ket{H}^{\otimes n} & : &
		\displaystyle
		\sum_a \big|\Tr(P_a\rho)\big|
		& = &
		(1+\sqrt{2})^n\\
		\ket{SH}^{\otimes n} & : &
		\displaystyle
		\sum_a \big|\Tr(P_a\rho)\big|
		& = &
		(1+\sqrt{3})^n\\
	\end{array}
\end{equation*}
and in the limit of $n,k \to\infty$ with fixed $\frac{k}{n}$, Clifford+$kT$ robustness $R_k(\rho)$ is guaranteed to grow (exponentially in $n$) for
\begin{equation*}
	\begin{array}{lclclcl}
		\ket{H}^{\otimes n} & : &
		\frac{k}{n}
		& < &
		2\log_2 \frac{1+\sqrt{2}}{2} \approx 0.54311,\\
		\ket{SH}^{\otimes n} & : &
		\frac{k}{n}
		& < &
		2\log_2 \frac{1+\sqrt{3}}{2} \approx 0.89997.\\
	\end{array}
\end{equation*}
Combining this with the relationship between $R_k(\rho)$ and the sampling cost of simulating a state $\rho$ using Clifford+$kT$ states (to be discussed in Sec.~\ref{subsec:sampling overhead}), we can guess, for example, that a quantum circuit with $n$ $T$ gates cannot be efficiently simulated in a way described in Sec.~\ref{subsec:sampling overhead} when $\frac{k}{n} \lesssim 0.54311$ for large $n$ and $k$. 

%%%%%%%%%%%%%%%%%%%%%%%%%%%%%%%%%%%%%%%%%%
\subsection{Upper bound} \label{subsec:upper bound}
%%%%%%%%%%%%%%%%%%%%%%%%%%%%%%%%%%%%%%%%%%
Since an identity operation is trivially (non-strict) Clifford+$kT$ for any $k$,
Theorem~\ref{prop:monotone} implies that the Clifford+$kT$ robustness of a given state is monotonic in $k$:
\begin{corollary}\label{corollary:monotone_in_k}
	For any quantum state $\rho$,
	\begin{equation*}
		R_0(\rho) \geq R_1(\rho) \geq \cdots \geq R_k(\rho) \geq R_{k+1}(\rho) \geq \cdots.
	\end{equation*}
\end{corollary}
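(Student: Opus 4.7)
The plan is to derive the chain of inequalities directly from the monotonicity theorem (Theorem~\ref{prop:monotone}) by choosing the free channel to be the identity $\Phi = \mathrm{id}$ with $\Delta k = 1$. Once $\mathrm{id}$ is verified to qualify as a Clifford$+1T$ channel in the sense of the definition in Section~\ref{subsec:Clifford+kt_gates/states}, a single application of Theorem~\ref{prop:monotone} gives $R_{k+1}(\rho) = R_{k+1}(\mathrm{id}(\rho)) \leq R_k(\rho)$ for every $k \geq 0$, and iterating in $k$ assembles the full chain stated in the corollary.

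The only thing to check is therefore that $\mathrm{id}$ is genuinely a Clifford$+1T$ channel, i.e., that it maps every pure strict Clifford+$kT$ state to a (possibly non-strict) Clifford$+(k{+}1)T$ state. This reduces to the trivial set inclusion
\begin{equation*}
    \mathcal{C}_n^{+(\leq k)T} \;=\; \bigcup_{0 \leq k' \leq k}\mathcal{C}_n^{+k'T} \;\subseteq\; \bigcup_{0 \leq k' \leq k+1}\mathcal{C}_n^{+k'T} \;=\; \mathcal{C}_n^{+(\leq k+1)T},
\end{equation*}
which holds directly by Definitions~\ref{def:Clifford+kT_gates} and~\ref{def:Clifford+kT_states}. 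Hence any $\ket{\psi}\in\mathcal{C}_n^{+kT}$ is a fortiori an element of $\mathcal{C}_n^{+(\leq k+1)T}$, fulfilling the output requirement of the Clifford$+1T$ channel with the trivial decomposition $\mathrm{id}(\ket{\psi}\!\bra{\psi}) = \ket{\psi}\!\bra{\psi}$.

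No real obstacle is anticipated: the corollary is essentially a tautological consequence of the monotonicity theorem combined with the fact that enlarging the allowed $T$ count can only enlarge the free set. As an alternative route, one can bypass the channel formalism entirely and argue directly from Definition~\ref{def:Clifford+kT_robustness}: every feasible Clifford+$(\leq k)T$ pseudo-mixture decomposition of $\rho$ is automatically a feasible Clifford+$(\leq k{+}1)T$ pseudo-mixture decomposition with the same $\ell^1$-weight $\sum_i|c_i|$, so the minimum defining $R_{k+1}(\rho)$ is taken over a feasible region containing that for $R_k(\rho)$ and is therefore no larger. Either route yields the displayed chain of inequalities.
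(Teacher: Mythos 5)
Your proof is correct and follows essentially the same route as the paper: the paper likewise observes that the identity is trivially a (trace-preserving) Clifford$+\Delta kT$ channel and invokes Theorem~\ref{prop:monotone}. Your alternative one-line argument via nesting of the feasible decomposition sets in Definition~\ref{def:Clifford+kT_robustness} is also valid and even more elementary, but it does not change the substance.
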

\noindent
This gives a rather trivial (\textit{i.e.} $k$-independent) upper bound on Clifford+$kT$ robustness, which is namely RoM (or its upper bound).
However, the existing analyses of upper bounds on RoM (\textit{e.g.}~\cite{LiuWinter:2020}) typically rely on special properties of stabilizer states,
which hinder their naive extension to the Clifford+$kT$ robustness case.
We leave the derivation of a non-trivial ($k$-dependent) upper bound for future work.

%%%%%%%%%%%%%%%%%%%%%%%%%%%%%%%%%%%%%%%%%%
\subsection{Relation to simulation overhead} \label{subsec:sampling overhead}
%%%%%%%%%%%%%%%%%%%%%%%%%%%%%%%%%%%%%%%%%%
As in the case of RoM, one can show that Clifford+$kT$ robustness of a given quantum state is related to its simulation cost  by sampling Clifford+$kT$ states. 
Consider a task of simulating generic (in particular highly non-Clifford) quantum circuits
while only allowed to (efficiently) implement Clifford+$kT$ circuits on an early-FTQC device.
More concretely, the task is to estimate an expectation value $\Tr (P\rho)$ of a Pauli operator $P$
for a quantum state $\rho$ obtained by applying the circuit to an input state $\ket{0}^{\otimes n}$.

Given a decomposition of $\rho$ into a Clifford+$kT$ pseudo-mixture (which need not be optimal), the estimand is
\begin{equation*}
	\begin{array}{lcl}
		\Tr (P\rho)
		& = &
		\displaystyle
		\sum_i c_i \Tr(P\ket{\psi_i}\!\bra{\psi_i})\\
		& = &
		\displaystyle
		\sum_i \frac{|c_i|}{\sum_j |c_j|} \cdot \bigg[
			\underbrace{
				\frac{c_i}{|c_i|}
			}_{
				=\mathrm{sgn}\,c_i
			} \cdot \bigg(\sum_j |c_j|\bigg) \cdot \Tr(P\ket{\psi_i}\!\bra{\psi_i})
		\bigg]
	\end{array}
\end{equation*}
and can be interpreted as an expectation value $\mathbb{E}[X]$ of a random variable $X$ where
$X(i) = \left( \mathrm{sgn}\,c_i \right) \cdot \big(\sum_j |c_j|\big) \cdot \Tr(P\ket{\psi_i}\!\bra{\psi_i})$
with a probability function $P(\{i\}) = \frac{|c_i|}{\sum_j |c_j|}$.
Therefore, by repeatedly sampling $i$ and computing $X(i)$ by evaluating the expectation values $\Tr (P\ket{\psi_i}\!\bra{\psi_i})$ for Clifford+$kT$ states (using either early-FTQC devices or classical computers), we can obtain an estimate of the target quantity.
Since the estimand satisfies $|X(i)|\leq \big(\sum_j |c_j|\big)$, the Hoeffding's inequality \cite[Theorem 2]{Hoeffding:1963} tells us
\begin{equation*}
	P\bigg(
		\Big|\frac{1}{N}\sum_{n=1}^N X^{(n)} -\Tr (P \rho)\Big| \geq \delta
	\bigg)
	\leq
	2\exp\bigg(-\frac{2N^2\delta^2}{N (2\sum_j |c_j|)^2}\bigg).
\end{equation*}
From this one can deduce the number of samples $N$ needed to estimate $\Tr (P \rho)$ within error~$\delta$ with probability~$1-\epsilon$ as
\begin{equation*}
	N
	\geq
	\frac{2}{\delta^2}
	\Big(
		\underbrace{
			\sum_j |c_j|
		}_{
			\geq R_k(\rho)
		}
	\Big)^2
	\log_e \frac{2}{\epsilon},
\end{equation*}
which exhibits quadratic growth in Clifford+$kT$ robustness for the optimal decomposition.

A notable application of the above argument is to the case where one equipped with Clifford+$kT$ computational power wants to simulate an $n$-qubit state $\rho$
but $n$ is so large that one is ignorant of its optimal decomposition into Clifford+$kT$ pseudo-mixture.
If $\rho$ (is not completely mixed and) admits a tensor product representation
\begin{equation*}
	\rho = \bigotimes_{i} \rho_i
\end{equation*}
together with a weak composition $(k_1, k_2, \cdots)$ of $k$ such that $k = \sum_i k_i$ $(k_i \geq 0)$
and optimal decompositions of $\rho_i$ into Clifford+$k_iT$ pseudo-mixture is known for each $i$,
then one can accomplish the task with cost
\begin{equation*}
	\sum_j |c_j|
	=
	\prod_i R_{k_i}(\rho_i)
\end{equation*}
instead of (smaller) $R_k(\rho)$. %(cf. sub-multiplicativity).
We will see concrete examples in Sec.~\ref{subsec:numerical results of robustness}.

%%%%%%%%%%%%%%%%%%%%%%%%%%%%%%%%%%%%%%%%%%
\subsection{Operational implications} \label{subsec:operational meaning}
%%%%%%%%%%%%%%%%%%%%%%%%%%%%%%%%%%%%%%%%%%
\label{subsec:convertibility} 
The monotonicity (Theorem~\ref{prop:monotone}) of Clifford+$kT$ robustness
also allows us to examine whether two quantum states are convertible under Clifford+$kT$ channels:
\begin{corollary}[Inconvertibility] \label{corollary:convertibility}
	For any two quantum states $\rho$ and $\rho'$ satisfying
	\begin{equation*}
		R_{k+\Delta k}(\rho') > R_k(\rho),
	\end{equation*}
	it is impossible to convert $\rho$ into $\rho'$ by a Clifford$+\Delta kT$ channel.
\end{corollary}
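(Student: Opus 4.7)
The plan is to prove this as a direct corollary of the monotonicity theorem (Theorem~\ref{prop:monotone}), using contraposition. The contrapositive statement reads: if there exists a trace-preserving Clifford$+\Delta kT$ channel $\Phi$ converting $\rho$ into $\rho'$, then $R_{k+\Delta k}(\rho') \leq R_k(\rho)$. Once this is established, the original statement follows by logical contraposition.

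First, I would assume the existence of such a channel $\Phi$ with $\Phi(\rho) = \rho'$. Since $\Phi$ is by hypothesis a trace-preserving Clifford$+\Delta kT$ channel, Theorem~\ref{prop:monotone} applies directly to give $R_{k+\Delta k}(\Phi(\rho)) \leq R_k(\rho)$. Substituting $\Phi(\rho) = \rho'$ then yields $R_{k+\Delta k}(\rho') \leq R_k(\rho)$. This contradicts the assumed strict inequality $R_{k+\Delta k}(\rho') > R_k(\rho)$ in the corollary's hypothesis, so no such channel can exist.

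There is essentially no obstacle here: the monotonicity theorem has already done all the work, and the corollary is a one-line consequence. The only subtlety worth flagging is that the conversion statement implicitly requires $\Phi$ to be trace-preserving (so that it is a legitimate quantum channel sending $\rho$ to a bona fide normalized state $\rho'$), which is exactly the hypothesis under which Theorem~\ref{prop:monotone} was proven. No additional assumptions on dimensionality or on the relationship between the input and output number of qubits are needed, since Theorem~\ref{prop:monotone} already accommodates Clifford$+\Delta kT$ channels between Hilbert spaces of different sizes.
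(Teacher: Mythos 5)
Your proof is correct and follows exactly the paper's own argument: assume a Clifford$+\Delta kT$ channel converting $\rho$ into $\rho'$ exists, invoke Theorem~\ref{prop:monotone} to get $R_{k+\Delta k}(\rho') \leq R_k(\rho)$, and contradict the hypothesis. Your remark about trace preservation is a reasonable clarification but does not change the substance.
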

\begin{proof}
If there exists a Clifford+$\Delta kT$ channel $\Phi$ converting $\rho$ into $\rho' = \Phi(\rho)$,
then the monotonicity implies $R_{k+\Delta k}(\rho') = R_{k+\Delta k}(\Phi(\rho)) \leq R_k(\rho)$,
which contradicts the assumption. 
\end{proof}

This point of view is especially useful for investigating problems in gate synthesis:
\begin{corollary}[Minimum $T$-gate requirement]\label{corollary:gate_synthesis_no_go}
	For any $n$-qubit gate~$U$ such that
	\begin{equation*}
		R_{k+\Delta k}(U\ket{+}^{\otimes n}) > R_k((T\ket{+})^{\otimes {(k' - \Delta k)}})
	\end{equation*}
	for some $k$ and $0 \leq \Delta k \leq k'$,
	it is impossible to synthesize $U$ using only $k'$ $T$ gates.
\end{corollary}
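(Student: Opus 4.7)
The plan is to prove the contrapositive via the inconvertibility result, Corollary~\ref{corollary:convertibility}. Suppose, toward contradiction, that $U$ can be implemented using only $k'$ $T$ gates, so that $U$ is a Clifford$+k'T$ gate in the sense of Definition~\ref{def:Clifford+kT_gates}. I would then exhibit a Clifford$+\Delta kT$ channel $\Phi$ that sends $(T\ket{+})^{\otimes(k'-\Delta k)}$ to $U\ket{+}^{\otimes n}$; by Theorem~\ref{prop:monotone}, this forces $R_{k+\Delta k}(U\ket{+}^{\otimes n}) \leq R_k((T\ket{+})^{\otimes(k'-\Delta k)})$, directly contradicting the hypothesis.

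The channel $\Phi$ I would construct uses standard $T$-gate teleportation~\cite{ZhouLeungChuang:2000}. Rewriting $U$ in the normal form of Eq.~\eqref{eq:Clifford+kT_gate_seq_rot_normal_form} exhibits it as a trailing Clifford preceded by $k'$ non-trivial Pauli $\pm\pi/4$-rotations, each Clifford-equivalent to a single $T$ gate. Then $\Phi$ first adjoins $n$ ancilla qubits prepared in $\ket{+}^{\otimes n}$, which is a Clifford operation since $\ket{+}^{\otimes n}$ is stabilizer; next it executes the $k'$ rotations on this ancilla register, teleporting $k'-\Delta k$ of them by consuming the $k'-\Delta k$ input copies of $T\ket{+}$ (each via a CNOT, a computational-basis measurement, and a classically controlled $S$ correction, all Clifford) and using genuine $T$ gates for the remaining $\Delta k$ rotations; finally it traces out the measured teleportation wires. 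By construction, the retained $n$-qubit output is $U\ket{+}^{\otimes n}\bra{+}^{\otimes n}U^{\dagger}$.

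To invoke Corollary~\ref{corollary:convertibility}, one must verify that $\Phi$ is a Clifford$+\Delta kT$ channel per the definition in Section~\ref{subsec:Clifford+kt_gates/states}, i.e.\ it maps every strict Clifford$+kT$ state to a Clifford$+(k+\Delta k)T$ state. Using the principle of deferred measurement, $\Phi$ reduces to: adjoin stabilizer ancillas, apply a single Clifford$+\Delta kT$ unitary, and partial-trace out a subset of qubits. The first two steps manifestly take $\mathcal{C}_{m}^{+kT}$ into $\mathcal{C}_{m+N}^{+(\leq k+\Delta k)T}$ by Definition~\ref{def:Clifford+kT_gates}. The main obstacle is the final partial trace: I would need to show it likewise preserves the Clifford$+(k+\Delta k)T$ property in mixed form, i.e.\ that the marginal of a pure Clifford$+(k+\Delta k)T$ state admits a decomposition into pure Clifford$+(\leq k+\Delta k)T$ components. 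I expect this to follow by measuring the traced-out qubits in the computational basis before tracing and noting that, thanks to the Clifford-conjugation rule Eq.~\eqref{eq:CR_PC=R_P'} applied to the normal form, each conditional post-measurement branch remains a Clifford$+(\leq k+\Delta k)T$ pure state; the partial trace is then the convex mixture of these branches. Once this technical point is settled, Corollary~\ref{corollary:convertibility} applied with $\rho=(T\ket{+})^{\otimes(k'-\Delta k)}$ and $\rho'=U\ket{+}^{\otimes n}$ yields the desired contradiction.
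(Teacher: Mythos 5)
Your overall route is essentially the paper's: assume a $k'$-$T$ synthesis of $U$, turn it via $T$-gate teleportation into a Clifford$+\Delta kT$ conversion of magic-state inputs into $U\ket{+}^{\otimes n}$, and contradict the hypothesized inequality through robustness monotonicity. The only structural difference is bookkeeping: the paper keeps $\ket{+}^{\otimes n}$ in the input, applies Corollary~\ref{corollary:convertibility} to $\rho=\ket{+}^{\otimes n}\otimes(T\ket{+})^{\otimes(k'-\Delta k)}$ and $\rho'=U\ket{+}^{\otimes n}$, and uses Theorem~\ref{prop:sub-multiplicative} with $R_0(\ket{+}^{\otimes n})=1$, whereas you absorb the $\ket{+}^{\otimes n}$ register into the channel and invoke Theorem~\ref{prop:monotone} directly on $(T\ket{+})^{\otimes(k'-\Delta k)}$; the two are interchangeable, and yours avoids sub-multiplicativity altogether.

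Two caveats on the part you add beyond the paper. First, your deferred-measurement reduction to ``adjoin stabilizer ancillas, apply a single Clifford$+\Delta kT$ unitary, trace out'' is not correct as stated: coherifying the classically controlled $S$ corrections produces controlled-$S$ gates, which are not Clifford; the branch-wise treatment you then give (project the measured wires onto computational-basis outcomes, with the correction a fixed Clifford in each branch) is the right framing and should replace it. Second, the remaining claim — that projecting some qubits of a pure Clifford$+(k+\Delta k)T$ state onto a computational-basis string leaves a Clifford$+(\leq k+\Delta k)T$ state on the rest — is genuinely nontrivial and is not delivered by Eq.~\eqref{eq:CR_PC=R_P'}: a computational-basis projector cannot be commuted through the non-Clifford $\pm\tfrac{\pi}{4}$ rotations, and when the measured $Z$ anticommutes with a rotation generator the branch becomes a superposition of two differently projected components, so your sketch does not settle it. To be fair, the paper's own proof silently assumes exactly the same thing the moment it asserts that a $k'$-$T$ synthesis would yield the required Clifford$+\Delta kT$ conversion channel; your write-up is more explicit about where that burden sits, but as written this verification remains an open step in your argument (and an implicit one in the paper's).
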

\begin{proof}
	It suffices to prove the existence of an $n$-qubit state $\ket{\psi}$ such that
	$\ket{\psi}\otimes (T\ket{+})^{\otimes (k' - \Delta k)}$ cannot be converted to $U\ket{\psi}$ under Clifford$+\Delta kT$ channels, given the assumption.
	One can see that $\ket{\psi} = \ket{+}^{\otimes n}$ is exactly such a state by
	applying Corollary~\ref{corollary:convertibility} to
	\begin{itemize}
	\item $\rho$ corresponding to $\ket{+}^{\otimes n} \otimes (T\ket{+})^{\otimes (k'-\Delta k)}$ and
	\item $\rho'$ corresponding to $U\ket{+}^{\otimes n}$
	\end{itemize}
	as the sub-multiplicativity of Clifford+$kT$ robustness (Theorem~\ref{prop:sub-multiplicative}) leads to
	\begin{equation*}
		R_{k}(\rho) \leq \underbrace{R_0(\ket{+}^{\otimes n})}_{=1} \cdot R_k((T\ket{+})^{\otimes (k' - \Delta k)}) < R_{k+\Delta k}(U\ket{+}^{\otimes n}) = R_{k+\Delta k}(\rho'). \qedhere
	\end{equation*}
\end{proof}

\noindent
Note that the $k=0, \Delta k = 0$ case of Corollary~\ref{corollary:gate_synthesis_no_go} reproduces the known result for RoM.
Conversely, Clifford+$kT$ robustness gives more precise information on the requirement than RoM
in certain cases \textit{e.g.} when $U$ satisfies
\begin{equation*}
	\begin{array}{ccccccc}
		R_0(U\ket{+}^{\otimes n}) & \geq & \cdots & \geq & R_k(U\ket{+}^{\otimes n}) & \geq & \cdots\\
		\rotatebox[]{90}{$\geq$} &&&& \rotatebox[]{90}{$<$}\\
		R_0((T\ket{+})^{\otimes k'}) & \geq & \cdots & \geq & R_k((T\ket{+})^{\otimes k'}) & \geq & \cdots
	\end{array}
\end{equation*}
for some $k$.

\newpage

%%%%%%%%%%%%%%%%%%%%%%%%%%%%%%%%%%%%
\section{Numerical observations \label{sec:numerics}}
%%%%%%%%%%%%%%%%%%%%%%%%%%%%%%%%%%%%
Here we display various results obtained by numerical experiments as well as some additional theoretical analysis inspired by them.

%%%%%%%%%%%%%%%%%%%%%%%%%%%%%%%%%%%%%%%%%%%%%%%%%%%%%%
\subsection{Enumeration of Clifford+$kT$ states} \label{subsec:count of Clifford+kT}
%%%%%%%%%%%%%%%%%%%%%%%%%%%%%%%%%%%%%%%%%%%%%%%%%%%%%%

First of all, the number of $n$-qubit Clifford$+0T$ (\textit{i.e.} stabilizer) states is known \cite{AaronsonGottesman:2004} to be
\begin{equation} \label{eq:Clifford+0T_state_count}
	2^n \prod_{j=0}^{n-1} (2^{n-j} + 1).
\end{equation}
Since it is also known (\textit{e.g.}~\cite{CalderbankRainsShorSloane:1996}) that there are
\begin{equation*}
	2^{n^2+2n} \prod_{j=1}^n (4^j - 1)
\end{equation*}
Clifford gates in total (up to global phase factors), a brute-force search through $2^{\mathcal{O}(n^2 k)}$ states in principle allows us to enumerate all the Clifford+$kT$ states.
But one carry out the task more efficiently by employing the following fact (based on Eq.\,\eqref{eq:Clifford+kT_gate_seq_rot_normal_form}):

\begin{proposition}[{\cite[Proposition 1]{GossetKliuchnikovMoscaRusso:2013}}]\label{prop:n-qubit_strict_Clifford+kT_state_normal_form}
	Any strict Clifford+$kT$ state can be constructed (up to a global phase factor) by applying $k$ sequential non-trivial Pauli rotations $R_{P_i}(+\frac{\pi}{4})$ $(i=1,\cdots, k)$ to a stabilizer state.
\end{proposition}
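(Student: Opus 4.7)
The plan is to take any Clifford+$kT$ gate realization of $\ket{\psi}$ and rewrite it into the claimed normal form. First I would apply Definition~\ref{def:Clifford+kT_states} to write $\ket{\psi} = U\ket{0}^{\otimes n}$ for some Clifford+$kT$ gate $U$, and then invoke the rewriting Eq.~\eqref{eq:Clifford+kT_gate_seq_rot_normal_form} to bring $U$ (up to a global phase) into the form $R_{P_k}(\epsilon_k\tfrac{\pi}{4})\cdots R_{P_1}(\epsilon_1\tfrac{\pi}{4})\,C$, with signs $\epsilon_i\in\{\pm 1\}$, a final Clifford $C$, and Paulis $P_i$ that are all non-identity (they arise as Clifford conjugates of single-qubit $Z$ operators and therefore never degenerate to $I$). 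Applying $U$ to $\ket{0}^{\otimes n}$, the trailing Clifford $C$ just prepares a stabilizer state $\ket{s} = C\ket{0}^{\otimes n}$, giving $\ket{\psi} \sim R_{P_k}(\epsilon_k\tfrac{\pi}{4})\cdots R_{P_1}(\epsilon_1\tfrac{\pi}{4})\,\ket{s}$.

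Next I would eliminate the negative signs using the identity $R_P(-\tfrac{\pi}{4}) = R_P(+\tfrac{\pi}{4})\,R_P(-\tfrac{\pi}{2})$, where $R_P(-\tfrac{\pi}{2})$ is a Clifford gate (being a $\tfrac{\pi}{2}$-Pauli rotation, Clifford up to a global phase). Replacing each offending $R_{P_i}(-\tfrac{\pi}{4})$ by $R_{P_i}(+\tfrac{\pi}{4})\,D_i$ with $D_i$ Clifford, I would then push $D_i$ rightward through the remaining Pauli rotations via Eq.~\eqref{eq:CR_PC=R_P'}. Because $D_i Q D_i^\dagger = \pm Q'$ for some Pauli $Q'$, such a push may re-introduce a minus sign at an earlier position, which I would immediately absorb by another application of the same identity before proceeding. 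Processing positions from the innermost outward while clearing any newly-created signs before moving on guarantees termination (one can attach a potential such as $\sum_{i:\,\epsilon_i=-1} 2^{\,i}$ that strictly decreases with each step); at the end every accumulated Clifford factor is absorbed into the rightmost stabilizer state, producing $\ket{\psi} \sim R_{P'_k}(+\tfrac{\pi}{4})\cdots R_{P'_1}(+\tfrac{\pi}{4})\,\ket{s'}$ with each $P'_i$ still a non-identity Pauli (since Clifford conjugation preserves non-identity-ness of Paulis).

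The main obstacle is the combinatorial book-keeping of the sign-elimination step, specifically ensuring that the iterative fix-up does not cycle indefinitely as Cliffords are pushed back and forth. Once a strictly decreasing potential such as the one above is fixed, termination is immediate and the remainder of the argument reduces to routine manipulations using only the Pauli-rotation identities already recorded in Section~\ref{subsec:t_and_clifford_gates}.
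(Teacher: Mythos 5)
Your argument is correct, but be aware that the paper itself never proves this proposition: it is imported as Proposition~1 of \cite{GossetKliuchnikovMoscaRusso:2013}, with Eq.~\eqref{eq:Clifford+kT_gate_seq_rot_normal_form} explicitly described as the cruder, mixed-sign precursor of that ``improved version''. So what you have produced is a self-contained reconstruction of the cited result rather than a variant of an in-paper proof, and as such it is sound: Definition~\ref{def:Clifford+kT_states} plus Eq.~\eqref{eq:Clifford+kT_gate_seq_rot_normal_form} give $\ket{\psi}\sim R_{P_k}(\pm\tfrac{\pi}{4})\cdots R_{P_1}(\pm\tfrac{\pi}{4})\,C\ket{0}^{\otimes n}$ with non-identity $P_i$, and your identity $R_P(-\tfrac{\pi}{4})=R_P(+\tfrac{\pi}{4})R_P(-\tfrac{\pi}{2})$, together with the Cliffordness of $\pm\tfrac{\pi}{2}$ Pauli rotations and Eq.~\eqref{eq:CR_PC=R_P'}, removes every minus sign at the cost of updating the Paulis (which stay non-identity) and the stabilizer state. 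The only part I would streamline is the termination bookkeeping, which can be avoided entirely: either induct on $k$ (put the rightmost $k-1$ rotations in all-plus form by the induction hypothesis; if the outermost sign is negative, split off $R_{P_k}(-\tfrac{\pi}{2})$, push it through the $k-1$ rotations in one sweep and absorb it into the stabilizer state, then invoke the induction hypothesis once more on the resulting length-$(k-1)$ mixed-sign word), or make a single right-moving pass carrying one accumulated Clifford ``wavefront'' that swallows each extra $R(-\tfrac{\pi}{2})$ as it is created and is finally absorbed into $C\ket{0}^{\otimes n}$. Your potential $\sum_{i:\,\epsilon_i=-1}2^{\,i}$ does work under the reading that fixing position $i$ pushes the generated Clifford all the way into the state (that move removes $2^{\,i}$ and re-introduces at most $2^{\,i}-1$), so there is no gap, just more machinery than needed; note also that strictness of the state is never used in your argument, which is fine since the statement only becomes weaker without it.
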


\noindent
As there are (only) $4^n$ Pauli operators, the number of states to be searched through can be reduced to $2^{\mathcal{O}(nk)}$.
The numerical results are summarized in the following tables and figure.\footnote{
	\setlength{\baselineskip}{16pt}%
	We used \texttt{Qulacs}~\cite{suzuki2021qulacsfast} to generate candidate state vectors.
	Each newly generated state was compared with previously generated ones, after rounding its amplitudes (elements) to ten digits of precision.
}

It seems that the number of distinct $n$-qubit Clifford+$kT$ states for fixed $n$ grows exponentially in $k$ (the left panel of Fig.~\ref{fig:num_states}).
This results in a huge number of states even for small $n$ and $k$, making further numerical simulation impractical.
The scaling in $n$ for fixed $k$ (the right panel of Fig.~\ref{fig:num_states}) is not clear given our limited data.
It appears to be a simple exponential growth at least for small $n$, but the results for $k=0$ show deviation for larger $n$,
although it (still) does not match the scaling $2^{\Theta(n^2)}$ expected from Eq.~\eqref{eq:Clifford+0T_state_count}.

\newpage

\begin{table*}[!h]
		\begin{tabular}{wc{40pt}|wc{40pt}wc{40pt}wc{40pt}wc{40pt}wc{40pt}wc{40pt}wc{32pt}wc{32pt}wc{32pt}wc{32pt}wc{32pt}}
			$k$ & $0$ & $1$ & $2$ & $3$ & $4$ & $5$ & $6$ & $7$ & $8$ & $9$ & $10$ \\
			\hline
			$n=1$ & $6$			 & $18$				& $42$			& $90$			  & $186$		 & $378$		  & $762$	  & $1530$ & $3066$ & $6138$ & $12282$ \\
			$n=2$ & $60$		& $420$			 & $2580$	   & $18900$	 & $134100$ & $1040340$ & $\cdots$ &				&				&				 & \\
			$n=3$ & $1080$	 & $16200$		& $227880$ & $4098600$ & $\cdots$	&						&				  &				   &				&				& \\
			$n=4$ & $36720$ & $1138320$ & $\cdots$	  &						 &					  &						 &					&				&				 &				 & \\
			$\vdots$
	\end{tabular}
	\caption{The number of distinct $n$-qubit (non-strict) Clifford+$kT$ states.}
\end{table*}

\begin{table*}[!h]
		\begin{tabular}{wc{40pt}|wc{40pt}wc{40pt}wc{40pt}wc{40pt}wc{40pt}wc{40pt}wc{32pt}wc{32pt}wc{32pt}wc{32pt}wc{32pt}}
			$k$ & $0$ & $1$ & $2$ & $3$ & $4$ & $5$ & $6$ & $7$ & $8$ & $9$ & $10$ \\
			\hline
			$n=1$ & $6$			 & $12$				& $24$			& $48$			  & $96$		 & $192$		  & $384$	  & $768$ & $1536$ & $3072$ & $6144$ \\
			$n=2$ & $60$		& $360$			 & $2160$	   & $16320$	 & $115200$ & $906240$ & $\cdots$ &				&				&				 & \\
			$n=3$ & $1080$	 & $15120$		& $211680$ & $3870720$ & $\cdots$	&						&				  &				   &				&				& \\
			$n=4$ & $36720$ & $1101600$ & $\cdots$	  &						 &					  &						 &					&				&				 &				 & \\
			$\vdots$
	\end{tabular}
	\caption{The number of distinct $n$-qubit strict Clifford+$kT$ states.}
	\label{tab:number_of_1-qubit_strict_Clifford+kT_states}
\end{table*}

\begin{figure}[!h]
	\begin{minipage}{0.48\textwidth}
		\centering
		\includegraphics[width=\textwidth]{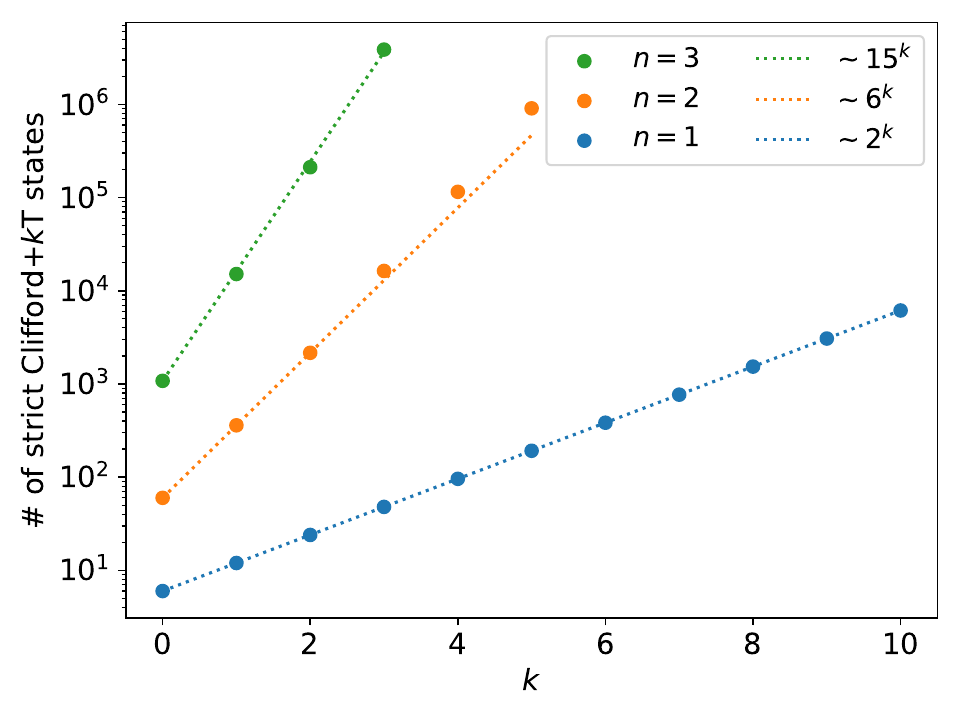}
	\end{minipage}
	\begin{minipage}{0.48\textwidth}
		\centering
		\includegraphics[width=\textwidth]{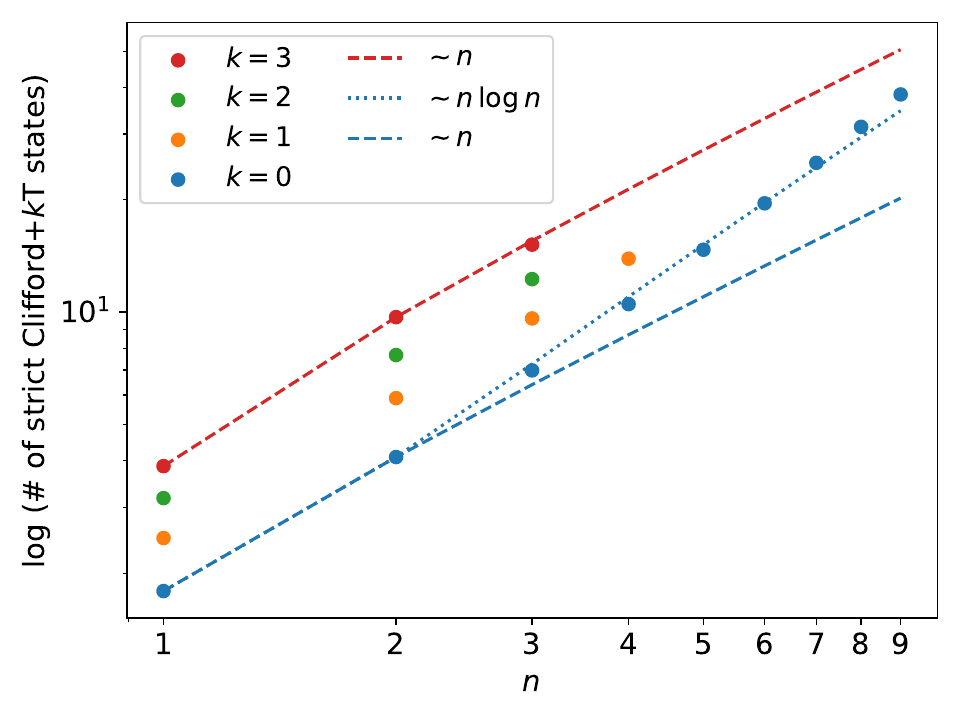}
	\end{minipage}
	\caption{
    Plots of the number of $n$-qubit distinct strict Clifford+$kT$ states (from Table~\ref{tab:number_of_1-qubit_strict_Clifford+kT_states}).
    In the right panel, the lines are drawn by fitting the parameters of $y = ax + b$ (resp. $y = a'x\log x +b'$) using the first two ($n=1,2$) data points.
    }
    \label{fig:num_states}
\end{figure}

\newpage

Although we have not been able to find closed-form formulae for $n \geq 2$ qubits so far,
there is a fairly complete picture for the $n=1$ case and we can explicitly derive the number of distinct Clifford+$kT$ states as follows.
First, recall the argument in Sec.~\ref{subsec:Clifford+kt_gates/states}:
\begin{lemma}\label{lem:strict_gate_vs_state}
	A one-qubit strict Clifford+$kT$ gate $U$ applied to $\ket{0}$ is a strict Clifford+$k'T$ state $(k' \leq k)$
	\begin{equation*}
		U\ket{0} = U'\ket{0}
	\end{equation*}
	(where $U'$ is a strict Clifford+$k'T$ gate) if and only if $U = U' T^m$ $(m\in \mathbb{Z}/8\mathbb{Z})$.
\end{lemma}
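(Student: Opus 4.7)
The plan is to prove each direction separately, leveraging facts already assembled in Section~\ref{subsec:Clifford+kt_gates/states}.

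For the ``if'' direction, the key observation is that $T = \mathrm{diag}(1,\, e^{i\pi/4})$ stabilizes $\ket{0}$, so $T^m\ket{0} = \ket{0}$ for every $m \in \mathbb{Z}/8\mathbb{Z}$. Thus if $U = U' T^m$, then
\begin{equation*}
    U\ket{0} = U' T^m\ket{0} = U'\ket{0},
\end{equation*}
which gives the desired equality of the states produced from $\ket{0}$.

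For the ``only if'' direction, I would follow exactly the argument already sketched in the paragraph just before the lemma. Starting from $U\ket{0} = U'\ket{0}$, set $V \coloneqq U'^\dagger U$. By Proposition~\ref{prop:hermite_conjugate} and Definition~\ref{def:Clifford+kT_gates}, $V$ is a (possibly non-strict) Clifford+$(k+k')T$ gate, and in particular a one-qubit unitary whose entries lie in $\mathbb{Q}[e^{2\pi i/8}]$. Since $V\ket{0} = \ket{0}$, unitarity forces the computational-basis normal form $V = \begin{pmatrix} 1 & 0 \\ 0 & e^{i\phi} \end{pmatrix}$ for some $\phi \in \mathbb{R}$. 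Invoking the cited result~\cite[Appendix A]{KliuchnikovMaslovMosca:2012} then restricts $\phi$ for such a Clifford+$T$ gate to integer multiples of $\pi/4$, so $V = T^m$ for some $m \in \mathbb{Z}/8\mathbb{Z}$, which rearranges to $U = U'T^m$.

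The only non-trivial input is the Kliuchnikov-Maslov-Mosca constraint on diagonal phases realizable by one-qubit Clifford+$T$ unitaries, and this has already been black-boxed in the excerpt. All remaining steps---the trivial stabilization of $\ket{0}$ by $T$, the closure of Clifford+$T$ gates under multiplication and Hermitian conjugation, and the elementary fact that a $2\times 2$ unitary fixing $\ket{0}$ must be diagonal---are routine, so I do not anticipate any hidden obstacle.
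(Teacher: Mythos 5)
Your proof is correct and follows essentially the same route as the paper: the ``if'' direction is the trivial observation $T^m\ket{0}=\ket{0}$, and the ``only if'' direction is exactly the argument the paper recalls from Section~\ref{subsec:Clifford+kt_gates/states}, reducing $U'^\dagger U$ to a diagonal unitary fixing $\ket{0}$ and invoking the Kliuchnikov--Maslov--Mosca restriction of the phase to multiples of $\pi/4$, hence $U'^\dagger U = T^m$. No gaps beyond the level of rigor the paper itself employs (working up to global phase).
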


\noindent
Then, one is led to the following formula:

\begin{proposition} \label{prop:num of one-qubit CkT}
	The number of distinct one-qubit strict Clifford+$kT$ states $(k \geq 1)$ is
	\begin{equation*}
		n_k = 6 \cdot 2^k.
	\end{equation*}
\end{proposition}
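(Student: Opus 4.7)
The plan is to count the strict Clifford+$kT$ \emph{gates} $U$ for which $U\ket{0}$ is itself a strict Clifford+$kT$ state and then divide by the multiplicity of such gates per state. The Matsumoto--Amano normal form recorded in Sec.~\ref{subsec:Clifford+kt_gates/states} yields a total of $s_k = 3 \cdot 2^{k-1} \cdot 24 = 36 \cdot 2^k$ strict Clifford+$kT$ gates. By Lemma \ref{lem:strict_gate_vs_state}, the set of strict gates mapping $\ket{0}$ to $U\ket{0}$ is precisely the orbit $\{UT^m : m \in \mathbb{Z}/8\mathbb{Z}\}$, so the problem reduces to analyzing the distribution of strict $T$-counts across this orbit.

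Splitting the orbit by parity of $m$: the four even powers give $U, US, UZ, USZ$, all strict Clifford+$kT$ by Proposition \ref{prop:adding_Clifford}, while the four odd powers are Clifford-multiples of $UT$ and hence share the strict count of $UT$, which by Proposition \ref{prop:adding_T_one_qubit} must be either $k+1$ or $k-1$. It follows that $U\ket{0}$ is a strict Clifford+$kT$ state exactly when $UT$ is strict Clifford+$(k+1)T$, and in that case exactly four strict Clifford+$kT$ gates in the orbit yield $U\ket{0}$. Letting $a_k$ denote the number of strict Clifford+$kT$ gates $U$ with $UT$ strict Clifford+$(k+1)T$, we obtain $n_k = a_k/4$.

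To evaluate $a_k$, I would set $b_k := s_k - a_k$, namely the number of strict Clifford+$kT$ gates $U$ with $UT$ strict Clifford+$(k-1)T$, and establish that $U \mapsto UT$ is a bijection from these onto the strict Clifford+$(k-1)T$ gates $V$ for which $VT^{-1}$ is strict Clifford+$kT$. Since $VT = VT^{-1} \cdot T^2 = VT^{-1} \cdot S$ is a Clifford-multiple of $VT^{-1}$, Proposition \ref{prop:adding_Clifford} lets us replace the condition ``$VT^{-1}$ strict Clifford+$kT$'' by ``$VT$ strict Clifford+$kT$'', identifying the image with $a_{k-1}$. This yields $b_k = a_{k-1}$ and the recurrence $a_k = 36 \cdot 2^k - a_{k-1}$ with base case $a_0 = s_0 = 24$ (every Clifford $V$ has $VT$ strict Clifford+$1T$). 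Its unique solution is $a_k = 24 \cdot 2^k$, so $n_k = 6 \cdot 2^k$. The main technical point I expect is the bijection argument, which hinges on the single-qubit exclusion of the ``middle'' case in Proposition \ref{prop:adding_T_one_qubit}; this is precisely what guarantees $a_k + b_k = s_k$ without any residual term and would fail outright for $n \geq 2$.
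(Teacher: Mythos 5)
Your proposal is correct and takes essentially the same route as the paper's proof: both rest on the Matsumoto--Amano count of strict Clifford+$kT$ gates, Lemma~\ref{lem:strict_gate_vs_state}, and Propositions~\ref{prop:adding_Clifford} and \ref{prop:adding_T_one_qubit}, and your gate-level recurrence $a_k = 36\cdot 2^k - a_{k-1}$ is precisely the paper's state-level recurrence $n_k + n_{k-1} = 9\cdot 2^k$ rescaled by the $4$-to-$1$ gate-to-state multiplicity. The only difference is presentational: the paper counts the distinct states $U\ket{0}$ directly and identifies that set as the union of the strict $k$ and strict $(k-1)$ states, whereas you partition the strict $k$ gates according to whether $UT$ goes up or down in $T$-count and convert the downward class into $a_{k-1}$ via the bijection $U \mapsto UT$ (the paper's observation that $U'T$ must be strict Clifford+$kT$ for a strict Clifford+$(k-1)T$ state $U'\ket{0}$ is exactly the inverse of this bijection).
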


\begin{proof}
	Employing the Matsumoto-Amano normal form
	of one-qubit strict Clifford+$kT$ gates~$U$, it follows from Lemma~\ref{lem:strict_gate_vs_state} that the number of distinct states of form $U\ket{0}$ is
	\begin{equation*}
		3 \cdot 2^{k-1} \cdot 24 \cdot \tfrac{1}{4}\ (= 9 \cdot 2^k)
	\end{equation*}
	since four $U$'s differing from each other by $(T^2 =)\, S$ gates in the rightmost Clifford gate $C$'s in Eq.~\eqref{eq:Matsumoto-Amano_normal_form}
	(and only those) give an identical state.\footnote{
		\setlength{\baselineskip}{16pt}%
        That there are $24$ one-qubit Clifford gates (up to global phase factors) in total can also be proved by the same reasoning:
        each of the $6$ one-qubit stabilizer states represents an equivalence class of one-qubit Clifford gates up to $S$ gates (which consists of $4$ gates).
	}
	The whole set of such states by definition contains all the strict Clifford+$kT$ states.
	
	On the other hand, for any strict Clifford$+(k-1)T$ state $U'\ket{0}$ where $U'$ is a strict Clifford$+(k-1)T$ gate,
	the gate $U'T$ must be either a strict Clifford+$kT$ gate or a strict Clifford$+(k-2)T$ gate from Proposition~\ref{prop:adding_T_one_qubit}, 
	but the latter possibility is inconsistent with the fact that $U'T\ket{0} = U'\ket{0}$ is a strict Clifford$+(k-1)T$ state and hence cannot occur.
	This conversely means that the set of $U\ket{0}$ states also contains all the strict Clifford$+(k-1)T$ states.	
	As Propositions~\ref{prop:adding_Clifford}, \ref{prop:adding_T_one_qubit} and Lemma~\ref{lem:strict_gate_vs_state} ensure that the set in fact does not contain any other states, one has
	\begin{equation*}
		n_k + n_{k-1} = 9 \cdot 2^{k}.
	\end{equation*}
	Given the fact $n_0 = 6$, the solution of this recurrence relation is straightforward.\footnotemark
\end{proof}

\newpage

\addtocounter{footnote}{-1}
\footnotetext{
	Division by $2^{k}$ gives
	\begin{equation*}
		\bigg(\frac{n_k}{2^k} - 6\bigg)
		=
		-\frac{1}{2} \cdot \bigg( \frac{n_{k-1}}{2^{k-1}} -6 \bigg)
	\end{equation*}
	and therefore
	\begin{equation*}
		\frac{n_k}{2^k} - 6 = \left(-\frac{1}{2}\right)^{k} \cdot (n_0 - 6) = 0.
	\end{equation*}
}

\noindent
In fact, it is indeed correctly reproduced in Table~\ref{tab:number_of_1-qubit_strict_Clifford+kT_states}. 
From this result, one can derive several normal forms for one-qubit strict Clifford+$kT$ states.

\begin{corollary}[State version of the normal form \textit{à la} \cite{MatsumotoAmano:2008}]
	A one-qubit strict Clifford+$kT$ state $(k \geq 1)$ can be uniquely expressed as
	\begin{equation*}
		\begin{array}{r}
			T\,(HT|SHT)\{k-1\}\ket{\phi},\phantom{\text{ or}}\\
			HT\,(HT|SHT)\{k-1\}\ket{\phi},\text{ or}\\
			SHT\,(HT|SHT)\{k-1\}\ket{\phi}\phantom{,\text{ or}}
		\end{array}
		\label{eq:1-qubit_strict_Clifford+kT_state_normal_form}
	\end{equation*}
	where $\ket{\phi} \in \mathcal{C}_1^{+0T}\backslash\{\ket{0}, \ket{1}\}$.
\end{corollary}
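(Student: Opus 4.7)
The plan is to prove the corollary via a counting bijection. The set of expressions parametrized by a leading character in $\{T, HT, SHT\}$, a subsequent string in $(HT|SHT)^{k-1}$, and a state $\ket{\phi} \in \mathcal{C}_1^{+0T}\setminus\{\ket{0},\ket{1}\}$ has cardinality $3 \cdot 2^{k-1} \cdot 4 = 6 \cdot 2^k$, which by Proposition~\ref{prop:num of one-qubit CkT} equals $n_k$. So it suffices to show that the evaluation map $(V, \ket{\phi}) \mapsto V\ket{\phi}$ is injective and lands in the set of strict Clifford$+kT$ states.

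The key step is to analyze the equivalence class of gates sending $\ket{0}$ to the given state. Fix a Clifford $C$ with $C\ket{0} = \ket{\phi}$, so that $V\ket{\phi} = (VC)\ket{0}$ and $VC$ is in Matsumoto-Amano (MA) form for a strict Clifford$+kT$ gate. By Lemma~\ref{lem:strict_gate_vs_state}, the collection of gates $W$ satisfying $W\ket{0} = V\ket{\phi}$ equals $\{VC \cdot T^m : m \in \mathbb{Z}/8\mathbb{Z}\}$. Using $T^2 \sim S$, the even-$m$ members $VC \cdot S^{m/2}$ are all strict Clifford$+kT$. For odd $m$, rewrite as $V C' T$ with $C' = C S^{(m-1)/2}$, and push $T$ through $C'$ via Eq.~\eqref{eq:CR_PC=R_P'}: $C' T \sim R_P(\pm\pi/4) \cdot C'$ where $P = C' Z {C'}^\dagger$. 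The leading MA character $A$ of $R_P(\pm\pi/4)$ is $T$, $HT$, or $SHT$ according as $P$ is $\pm Z$, $\pm X$, or $\pm Y$. If $A \in \{HT, SHT\}$, the concatenation $V \cdot A$ extends the MA string to strict Clifford$+(k+1)T$; if $A = T$, the rightmost pair $X_k \cdot T$ collapses via $(HT) \cdot T \sim HS$ or $(SHT) \cdot T \sim SHS$ (or $T \cdot T \sim S$ in the case $k=1$) to a Clifford, so $V \cdot T$ is strict Clifford$+(k-1)T$.

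Since $S$ normalizes $\langle Z\rangle$, the condition $P = \pm Z$ depends only on $C$ and is equivalent to $\ket{\phi} = C\ket{0} \in \{\ket{0}, \ket{1}\}$. The exclusion $\ket{\phi} \notin \{\ket{0}, \ket{1}\}$ therefore forces every odd-$m$ element of the equivalence class to be strict Clifford$+(k+1)T$, so the minimum strict $T$-count in the class is $k$, proving $V\ket{\phi}$ is a strict Clifford$+kT$ state. For injectivity, the four strict Clifford$+kT$ members of the equivalence class are $\{VC \cdot S^j\}_{j=0}^{3}$; any coincidence $V_1 \ket{\phi_1} = V_2 \ket{\phi_2}$ places $V_1 C_1$ and $V_2 C_2$ among these members (forcing the corresponding $m$ to be even), so MA uniqueness yields $V_1 = V_2$ and $C_1 = C_2 S^j$, hence $\ket{\phi_1} = \ket{\phi_2}$ via $S\ket{0} = \ket{0}$. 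The main obstacle is the careful MA normalization of $C' T$ for odd $m$ and the identification of its leading character $A$ with the Pauli axis $C' Z {C'}^\dagger$; once this extension-vs-collapse dichotomy is established, the counting bijection closes the argument.
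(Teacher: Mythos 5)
Your proposal is correct, but it runs the counting argument in the opposite direction from the paper, and it does considerably more direct work. The paper's proof is the cheap direction: it observes that every strict Clifford+$kT$ state, written as a Matsumoto--Amano gate $V C$ applied to $\ket{0}$, must have $C\ket{0} \notin \{\ket{0},\ket{1}\}$ (else the trailing $T$ is absorbed via $T\ket{0}=\ket{0}$, $T\ket{1}\sim\ket{1}$, contradicting strictness), so the $3\cdot 2^{k-1}\cdot 4 = 6\cdot 2^k$ listed expressions \emph{contain} all strict Clifford+$kT$ states; since Proposition~\ref{prop:num of one-qubit CkT} says there are exactly $6\cdot 2^k$ of the latter, pigeonhole forces the expressions to be pairwise distinct, strict, and exhaustive. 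You instead prove directly the two facts the paper gets for free from the pigeonhole --- that every expression $V\ket{\phi}$ with $\ket{\phi}\notin\{\ket{0},\ket{1}\}$ is a strict Clifford+$kT$ state, and that the parametrization is injective --- by analyzing the full gate class $\{VC\,T^m\}$ fixing the state (via Lemma~\ref{lem:strict_gate_vs_state}) and establishing the extension-vs-collapse dichotomy for odd $m$ through $P = C Z C^\dagger$ and the leading MA character of $R_P(\pm\tfrac{\pi}{4})$; surjectivity then follows from the count. Both hinge on Proposition~\ref{prop:num of one-qubit CkT} and MA uniqueness, but your route is essentially a constructive refinement: it identifies exactly which members of the $T^m$-class are strict Clifford+$kT$ (the four $VCS^j$) and why the exclusion of $\ket{0},\ket{1}$ is what makes the odd-$m$ members jump to $k+1$, at the cost of re-deriving machinery that the paper's three-line pigeonhole sidesteps.
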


\begin{proof}
	These $3 \cdot 2^{k-1} \cdot 4$ states should include all the one-qubit strict Clifford+$kT$ states (note that $T\ket{0}=\ket{0}$ and $T\ket{1}=e^{i\frac{\pi}{4}}\ket{1} \sim \ket{1}$).
	Since there are exactly $3 \cdot 2^{k-1} \cdot 4 = 6\cdot 2^k$ distinct one-qubit strict Clifford+$kT$ states from Proposition~\ref{prop:num of one-qubit CkT}, the inclusion is actually an equality
	and the states are guaranteed to be distinct from each other.
\end{proof}

\begin{corollary}[State version of the normal form \textit{à la} \cite{GossetKliuchnikovMoscaRusso:2013}]
	A one-qubit strict Clifford+$kT$ state $(k \geq 1)$ can be uniquely expressed as
	\begin{equation*}
		R_{P_k}(+\tfrac{\pi}{4})R_{P_{k-1}}(+\tfrac{\pi}{4}) \cdots R_{P_1}(+\tfrac{\pi}{4}) \ket{\psi}
	\end{equation*}
	where $\ket{\psi} \in \mathcal{C}_1^{+0T}$, $P_i \neq I$, $P_{i+1} \neq P_{i}$, and $P_{1}\ket{\psi} \neq \pm \ket{\psi}$.
\end{corollary}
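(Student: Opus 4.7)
The plan is to prove existence first, using Proposition~\ref{prop:n-qubit_strict_Clifford+kT_state_normal_form} as the starting point and reducing its output to the claimed normal form, and then to establish uniqueness via a counting argument matched against Proposition~\ref{prop:num of one-qubit CkT}.

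For existence, given a strict Clifford+$kT$ state $\ket{\chi}$, Proposition~\ref{prop:n-qubit_strict_Clifford+kT_state_normal_form} already supplies a decomposition $\ket{\chi} \sim R_{P_k}(+\tfrac{\pi}{4}) \cdots R_{P_1}(+\tfrac{\pi}{4})\ket{\psi}$ with $\ket{\psi} \in \mathcal{C}_1^{+0T}$ and non-trivial Paulis $P_i$. The plan is to show strictness forces the two remaining conditions. If $P_{i+1} = P_i$ for some $i$, the two adjacent rotations merge into $R_{P_i}(+\tfrac{\pi}{2}) = \tfrac{1}{\sqrt{2}}(I - iP_i)$, which is a Clifford gate up to a global phase; pushing this Clifford rightward through the surviving rotations via repeated application of Eq.~\eqref{eq:CR_PC=R_P'} converts each into another rotation of the form $R_{Q_j}(\pm\tfrac{\pi}{4})$ with $Q_j$ a non-trivial Pauli, and is finally absorbed into $\ket{\psi}$ to yield another stabilizer state. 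Since each $R_P(\pm\tfrac{\pi}{4})$ is Clifford+$1T$, the resulting expression exhibits $\ket{\chi}$ as Clifford+$(k-2)T$, contradicting strictness. Similarly, if $P_1\ket{\psi} = \pm\ket{\psi}$, then $R_{P_1}(+\tfrac{\pi}{4})\ket{\psi} = e^{\mp i\pi/8}\ket{\psi} \sim \ket{\psi}$ drops the first rotation, giving a Clifford+$(k-1)T$ decomposition and again contradicting strictness.

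For uniqueness, I count the tuples $(\ket{\psi}, P_1, \ldots, P_k)$ obeying the three conditions. There are $6$ one-qubit stabilizer states, and each is stabilized (up to sign) by exactly one non-trivial Pauli, leaving $2$ admissible non-trivial Paulis for $P_1$. For every $i \geq 2$, the constraints $P_i \neq I$ and $P_i \neq P_{i-1}$ leave $2$ choices. The total $6 \cdot 2 \cdot 2^{k-1} = 6 \cdot 2^k$ matches the count of distinct one-qubit strict Clifford+$kT$ states given by Proposition~\ref{prop:num of one-qubit CkT} exactly. Since the existence argument provides a surjection from tuples to strict Clifford+$kT$ states, and both have the same finite cardinality, the map must be a bijection, which is uniqueness.

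The main obstacle will be the merge-and-push reduction in the existence argument. One must verify that (i) iterated Clifford conjugation preserves non-triviality of each Pauli label, which is immediate since it sends non-trivial Paulis to non-trivial Paulis up to sign, and (ii) the sign ambiguity permitted by Eq.~\eqref{eq:CR_PC=R_P'} does not inflate the $T$-count, which follows from $R_P(-\tfrac{\pi}{4}) \sim T^\dagger = S^\dagger T$ being itself Clifford+$1T$. With these points in hand, the reduction lands cleanly at a Clifford+$(k-2)T$ (or Clifford+$(k-1)T$) decomposition, the strictness contradictions fire, and the full argument goes through.
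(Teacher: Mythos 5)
Your proposal is correct and follows essentially the same route as the paper: start from Proposition~\ref{prop:n-qubit_strict_Clifford+kT_state_normal_form}, observe that the constrained expressions exhaust all strict Clifford+$kT$ states, count them as $6\cdot 2^{k}$, and match this against Proposition~\ref{prop:num of one-qubit CkT} to force the inclusion to be an equality and the representations to be distinct. The only difference is that you spell out explicitly (via the merge-and-push reduction and the $P_1\ket{\psi}\neq\pm\ket{\psi}$ phase argument) why strictness rules out expressions violating the constraints, a step the paper leaves implicit in its ``should include all'' claim.
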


\begin{proof}
	These $2^{k} \cdot 6$ states should include all the one-qubit strict Clifford+$kT$ states,
	and the same argument as above applies.
\end{proof}

\newpage

%%%%%%%%%%%%%%%%%%%%%%%%%%%%%%%%%%%%%
\subsection{Clifford+$kT$ robustness of various non-stabilizer states} \label{subsec:numerical results of robustness}
%%%%%%%%%%%%%%%%%%%%%%%%%%%%%%%%%%%%%
Using the enumerated Clifford+$kT$ states, one can numerically compute the Clifford+$kT$ robustness of various quantum states 
by solving the basis pursuit problem Eq.~\eqref{eq:basis pursuit}. 
Since the problem size (the number of distinct Clifford+$kT$ states) grows rapidly with $n$ and $k$, we employ the technique introduced in \cite{HeinrichGross:2018} to speed up the computation of RoM by considering symmetries of a given quantum state, which is also applicable to computation of Clifford+$kT$ robustness with a slight modification.
For a more detailed explanation, see Appendix~\ref{app:symmetry}.

\begin{table}[h]
	\begin{tabular}{wc{40pt}|wc{6pt}wl{120pt}wl{140pt}wl{80pt}wl{40pt}}
		$k$		 && 0 & 1 & 2 & 3 \\ \hline
		$n=1$ && $1.4142136 \,\approx\, \sqrt{2}$						  & $1$																  		& $1$				 & $1$  \\
		$n=2$ && $1.7475469 \,\approx\, \frac{1 + 3\sqrt{2}}{3}$ & $1.3431458 \,\approx\, 7 - 4\sqrt{2}$		& $1$				  & $1$  \\
		$n=3$ && $2.2189514 \,\approx\, \frac{1 + 4\sqrt{2}}{3}$ & $1.7451660 \,\approx\, 47 - 32\sqrt{2}$		& $1.3431458$ & $1$  \\
		$n=4$ && $2.8627417 \,\approx\, \frac{3 + 8\sqrt{2}}{5}$ & $2.2161620 \,\approx\, 319 -224\sqrt{2}$	& $1.7451660$ & $\cdots$   \\
		$\vdots$
	\end{tabular}
	\caption{The values of $R_k((T\ket{+})^{\otimes n})$ with candidate symbolic expressions.}
	\label{tab:robustness_H}
\end{table}

First, we consider the Clifford+$kT$ robustness of $n$-qubit tensor products of the magic state $(T\ket{+})^{\otimes n}$.
The values of $R_k((T\ket{+})^{\otimes n})$ for small $n$ and $k$ are shown in Table~\ref{tab:robustness_H},
where some are identical for different $(n,k)$.
For example, the values for $(n,k) = (2,1)$ and $(3,2)$ are the same, as well as those for $(n,k) = (3,1)$ and $(4,2)$.
This is in fact not so surprising as they merely correspond to division of the system saturating the sub-multiplicativity (Theorem~\ref{prop:sub-multiplicative})\footnote{
	\setlength{\baselineskip}{16pt}%
	More generally, one can extend the inequality (in particular) for $k \leq n$ and see
	\begin{equation*}
		R_k((T\ket{+})^{\otimes n})
		\leq
		R_{k-1}((T\ket{+})^{\otimes (n-1)})
		\leq
		\cdots
		\leq
		R_{0}((T\ket{+})^{\otimes (n-k)}).
	\end{equation*}
}
\begin{equation} \label{eq:robustness_H_inequality}
	R_{k}((T\ket{+})^{\otimes n})
	\leq
	R_{k-1}((T\ket{+})^{\otimes (n-1)})
	\cdot
	\underbrace{
		R_{1}((T\ket{+})^{\otimes 1})
	}_{=1}.
\end{equation}
In other words, the optimal decomposition of $(T\ket{+})^{\otimes n}$ into Clifford+$kT$ states in those specific cases should be achieved by decomposing $(T\ket{+})^{\otimes (n-1)}$ into Clifford+$(k-1)T$ states, together with the (trivial) use of one $T$ gate to prepare a single $T\ket{+}$ state.

\newpage

More generally, the (almost-)equal values of $R_k((T\ket{+})^{\otimes n})$ for fixed $n-k$ combined with the argument in Sec.~\ref{subsec:sampling overhead}
tell us whether it makes any difference if we use Clifford+$kT$ states with $k\geq 1$ in addition to stabilizer states when simulating the tensor product of the magic states.
Assuming the knowledge of the optimal decompositions of $(T\ket{+})^{\otimes n'}$ into Clifford+$k'T$ pseudo-mixture for some $n' <n$ and $k' < k$
(and also that into stabilizer pseudo-mixture which should be easier to identify),
we want to compare the following two simulations of $(T\ket{+})^{\otimes n}$ using $k$ $T$ gates:
\begin{itemize}
\item each $T$ gate is used to prepare a $T\ket{+}$ state (resulting in $k$ of them in total)
\item every $k'$ $T$ gates is used to prepare a $(T\ket{+})^{\otimes n'}$ state (resulting in $\frac{k}{k'}$ of them in total)
\end{itemize}
For the latter to be cheaper than the former, Clifford+$kT$ robustness should satisfy
\begin{equation*}
	\big[
		R_1((T\ket{+})^{\otimes 1})
	\big]^k
	\big[
		R_0((T\ket{+})^{\otimes n'})
	\big]^{\frac{n-k}{n'}} 
	\overset{?}{>}
	\big[
		R_{k'}((T\ket{+})^{\otimes n'})
	\big]^{\frac{k}{k'}}
	\big[
		R_0((T\ket{+})^{\otimes n'})
	\big]^{\frac{n-\frac{k}{k'}n'}{n'}} 
\end{equation*}
or (since $R_1((T\ket{+})^{\otimes 1}) = 1$ and we are only interested in $n' > k'$ cases) equivalently
\begin{equation*}
	\big[
		R_0((T\ket{+})^{\otimes n'})
	\big]^{\frac{1}{n'}} 
	\overset{?}{>}
	\big[
		R_{k'}((T\ket{+})^{\otimes n'})
	\big]^{\frac{1}{n'-k'}}
\end{equation*}
where the right-hand side is almost the same as
$\big[
		R_{0}((T\ket{+})^{\otimes (n'-k')})
	\big]^{\frac{1}{n'-k'}}
$.
However, it appears that $\big[
	R_0((T\ket{+})^{\otimes n'})
\big]^{\frac{1}{n'}}$ is decreasing monotonically with $n'$ (see also \cite[Sec.\,4.1]{HeinrichGross:2018}) 
and therefore the desired inequality is typically not expected to hold,
meaning that using only stabilizer states would be enough.

%\newpage
\bigskip

\begin{table*}[h]
	\begin{tabular}{wc{40pt}|wc{6pt}wl{120pt}wl{120pt}wl{120pt}wl{80pt}}
		$k$		 && 0 & 1 & 2 & 3 \\ \hline
		$n=1$ && $1.7320508 \,\approx\, \sqrt{3}$								& $1.2247449 \,\approx\, \frac{\sqrt{6}}{2}$ & $1.0146119 \,\approx\, \frac{\sqrt{6}}{1+\sqrt{2}}$ & $1.0146119$\\
		$n=2$ && $2.2320508 \,\approx\, \frac{1 + 2\sqrt{3}}{2}$		   & $1.7941234$												  & $1.4245404$ & $1.1458531$\\
		$n=3$ && $3.0980762 \,\approx\, \frac{1 + 3\sqrt{3}}{2}$			& $2.5403969$												  & $2.0741375$ & $\cdots$  \\
		$n=4$ && $4.3310015 \,\approx\, \frac{13 + 20\sqrt{3}}{11}$ & $\cdots$   \\
		$\vdots$
	\end{tabular}
	\caption{The values of $R_k(\ket{SH}^{\otimes n})$ with candidate symbolic expressions.}
    \label{tab:robustness_SH}
\end{table*}

\newpage

Next, we consider the Clifford+$kT$ robustness of $n$-qubit tensor products of the magic state $\ket{SH}^{\otimes n}$.
We again observe in Table~\ref{tab:robustness_SH} the identical values for $(n,k) = (1,2)$ and $(1,3)$. However, this time the mechanism is completely different:
further numerical calculations in fact reveal that 
\begin{equation*}
	R_2(\ket{SH}^{\otimes 1})
	=
	R_3(\ket{SH}^{\otimes 1})
	=
	\cdots
	=
	R_6(\ket{SH}^{\otimes 1})
	>
	R_7(\ket{SH}^{\otimes 1}) \approx 1.0020233,
\end{equation*}
which correspond to saturation of the monotonicity (Theorem~\ref{prop:monotone}).
Our naive intuition is that the monotonicity is always strict,
but this example clearly shows that is not the case,
which also warns us to be careful when allocating limited number of $T$-gates in more generic and/or complicated cases.

We also note that the difference in values of $R_{k}(\ket{SH}^{\otimes n})$ for the same $n-k$ is larger than that in the $R_k((T\ket{+})^{\otimes n})$ case.
This indicates that a decomposition into Clifford+$kT$ pseudo-mixture is clearly advantageous over that into stabilizer pseudo-mixture in terms of sampling cost.
More concretely, a similar argument as in the $(T\ket{+})^{\otimes n}$ case asks whether Clifford+$kT$ robustness satisfies
\begin{equation*}
	\big[
		R_1(\ket{SH}^{\otimes 1})
	\big]^k
	\big[
		R_0(\ket{SH}^{\otimes n'})
	\big]^{\frac{n-k}{n'}} 
	\overset{?}{>}
	\big[
		R_{k'}(\ket{SH}^{\otimes n'})
	\big]^{\frac{k}{k'}}
	\big[
		R_0(\ket{SH}^{\otimes n'})
	\big]^{\frac{n-\frac{k}{k'}n'}{n'}}
\end{equation*}
or equivalently
\begin{equation*}
	\big[
		R_1(\ket{SH}^{\otimes 1})
	\big]^{k'}
	\big[
		R_0(\ket{SH}^{\otimes n'})
	\big]^{1-\frac{k'}{n'}} 
	\overset{?}{>}
	R_{k'}(\ket{SH}^{\otimes n'}),
\end{equation*}
and this time there exist cases where this holds (see Table~\ref{tab:robustness_SH_vs})
and hence the advantage is confirmed.

\begin{table}[h]
	\begin{tabular}{wc{40pt}|wc{6pt}wl{60pt}wl{60pt}wl{60pt}wl{60pt}}
		$k$		 && 0 & 1 & 2 & 3 \\ \hline
		$n=1$ && $1.732$ & $1.225$ & $0.866$ & $0.612$\\
		$n=2$ && $2.232$ & $1.830$ & $1.500$ & $1.230$ \\
		$n=3$ && $3.098$ & $2.603$ & $2.187$ & $\cdots$\\% $1.837$ \\
		$n=4$ && $4.331$ & $\cdots$	&& \\
		$\vdots$
	\end{tabular}
	\caption{The values of $\big[
			R_1(\ket{SH}^{\otimes 1})
		\big]^{k}
		\big[
			R_0(\ket{SH}^{\otimes n})
		\big]^{1-\frac{k}{n}}$.
		The five values shown in the table for $k \geq 1$ and $n\geq 1$ are larger than $R_k(\ket{SH}^{\otimes n})$.
	}
	\label{tab:robustness_SH_vs}
\end{table}

\newpage

\begin{table*}[h]
	\begin{tabular}{wc{40pt}|wc{6pt}wl{90pt}wl{90pt}wl{90pt}wl{90pt}wl{40pt}}
		$k$		 && 0 & 1 & 2 & 3\\ \hline
		&
		& $2.2000000$
		& $1.7451660$
		& $1.3431458$
		& $1$ &\\
		&&
		& $\approx\, 47 - 32\sqrt{2}$
		& $\approx\, 7 - 4\sqrt{2}$
	\end{tabular}
	\caption{The values of $R_k(\mathrm{CS}\ket{+}^{\otimes 2})$ with candidate symbolic expressions.}
    \label{tab:robustness_CS}
\end{table*}

\begin{table*}[h]
	\begin{tabular}{wc{40pt}|wc{6pt}wl{90pt}wl{90pt}wl{90pt}wl{90pt}wl{40pt}}
		$k$		 && 0 & 1 & 2 & 3 & 4\\ \hline
		&
		& $2.5555555$
		& $2.2161620$
		& $1.7451660$
		& $1.3431458$
		& $1$\\
		&&
		& $\approx\, 319 -224\sqrt{2}$
		& $\approx\, 47 - 32\sqrt{2}$
		& $\approx\, 7 - 4\sqrt{2}$
	\end{tabular}
	\caption{The values of $R_k(\mathrm{CCZ}\ket{+}^{\otimes 3})$ with candidate symbolic expressions.}
    \label{tab:robustness_CCZ}
\end{table*}

Finally, we consider the Clifford+$kT$ robustness of other non-Clifford states that can be used as resource states for universal quantum computation.
Specifically, we consider a two-qubit state corresponding to the controlled-$S$ gate $\mathrm{CS}\ket{+}^{\otimes 2}$ and a three-qubit state corresponding to the controlled-controlled-$Z$ gate $\mathrm{CCZ}\ket{+}^{\otimes 3}$. 

The results are shown in Tables~\ref{tab:robustness_CS} and~\ref{tab:robustness_CCZ}.
One notices that the values of Clifford+$kT$ robustness for $k \geq 1$ are equal to those appearing in Table~\ref{tab:robustness_H}.
This is related to the realization of the corresponding non-Clifford gates using $T$ gates.
For example, it is known that a CCZ gate can be realized by using four $T$ gates~\cite{Jones:2012}.
Then, Corollary~\ref{corollary:gate_synthesis_no_go} implies that
\begin{equation} \label{eq:robustness_CCZ_inequality}
	R_{k+\Delta k}(\mathrm{CCZ}\ket{+}^{\otimes 3})
	\leq
	R_{k}((T\ket{+})^{\otimes {(4-\Delta k)}})
\end{equation}
holds for any $0 \leq \Delta k \leq 4$, and the identical values appearing in the two tables manifest actual saturation of this inequality.\footnote{
	\setlength{\baselineskip}{16pt}%
	As a by-product, from Eqs.~\eqref{eq:robustness_H_inequality} and \eqref{eq:robustness_CCZ_inequality} (taking $k=3, \Delta k =0$ in the latter) one has
	\begin{equation*}
		R_3(\mathrm{CCZ}\ket{+}^{\otimes 3})
		\leq
		R_3((T\ket{+})^{\otimes 4})
		\leq
		R_2((T\ket{+})^{\otimes 3}).
	\end{equation*}
	Combining this with a numerical result $R_3(\mathrm{CCZ}\ket{+}^{\otimes 3}) = R_2((T\ket{+})^{\otimes 3})$,
	one can determine $R_3((T\ket{+})^{\otimes 4})$ without (far more) costly computation.
}
Similar argument applies to the CS gate which is known to be realized by three $T$ gates. 

\newpage

%%%%%%%%%%%%%%%%%%%%%%%%%%%%%%%%%%%%%%%%%%
\section{Outlook} \label{sec:conclusion}
%%%%%%%%%%%%%%%%%%%%%%%%%%%%%%%%%%%%%%%%%%
In this study, we introduced the notion of Clifford+$kT$ robustness and examined its properties and applications through both theoretical and numerical analysis.
Several open questions remain for future investigation:
\begin{itemize}
\item (Sec.~\ref{subsec:lower bound}) Are there any $n$-qubit states $\rho$ such that $R_k(\rho)$ grows exponentially in $n$ even when $\frac{k}{n} > 1$?
If such states exist, they would provide new insights into the computational power of quantum circuits with limited $T$-gate resources.\footnote{
	\setlength{\baselineskip}{16pt}%
	The lower bound given in Theorem \ref{theorem:lower_bound} is not at all helpful for showing
	the (non-)existence of $n$-qubit states $\rho$ with exponentially-growing $R_k(\rho)$ for $\frac{k}{n} \geq 1$.
	This can be seen as follows.
	First, substituting the expansion in Pauli matrices $\rho = \frac{1}{2^n} \sum_{a=1}^{4^n} \Tr(\rho P_a) P_a$ into $\Tr\,\rho^2 \leq 1$ gives
	\begin{equation*}
		\frac{1}{2^n}  \sum_{a=1}^{4^n} [\Tr(\rho P_a)]^2 \leq 1. 
	\end{equation*}
	Then, \textit{e.g.} the Cauchy-Schwarz inequality gives an upper bound
	\begin{equation*}
		\sum_{a=1}^{4^n} |\Tr(\rho P_a)| \leq % \sum_{a=1}^{4^n} \frac{1}{(\sqrt{2})^n} =
		(2\sqrt{2})^n,
	\end{equation*}
	where the equality holds when $|\Tr(\rho P_a)| = \frac{1}{\sqrt{2}}$ for all $P_a$ (but note that $\Tr(\rho I) = 1$ and such $\rho$ does not exist).
	Therefore, the lower bound of Clifford+$kT$ robustness given in Theorem~\ref{theorem:lower_bound} obeys
	\begin{equation*}
		\frac{1}{2^n(\sqrt{2})^k} \sum_a |\Tr(P_a\rho)| < (\sqrt{2})^{n-k},
	\end{equation*}
	which implies that (further) lower-bounding it (the left-hand side) will never provide us with useful information when $n-k \leq 0$ (or equivalently $\frac{k}{n} \geq 1$).
}
\item (Sec.~\ref{subsec:upper bound}) Is there a non-trivial upper bound on Clifford+$kT$ robustness as a function of $k$?
Since Clifford+$kT$ robustness characterizes the sampling cost of simulating a quantum circuit with $k$ $T$ gates, establishing such an upper bound would be valuable for estimating resource requirements in the early-FTQC era.
\item (Sec.~\ref{subsec:count of Clifford+kT}) Is there a closed-form expression for the number of distinct $n$-qubit Clifford+$kT$ states for $n \geq 2$?
A better understanding of the structure of the Clifford+$kT$ state space could lead to further theoretical development.
\item (Sec.~\ref{subsec:numerical results of robustness}) Can more numerical data on Clifford+$kT$ robustness be obtained for larger $n$ and $k$?
This would help deepen empirical understanding, and may be enabled by utilizing techniques for solving the basis pursuit problem, as proposed \textit{e.g.} in~\cite{hamaguchi2024handbookquantifying}.
\end{itemize}

\newpage

%%%%%%%%%%%%%%%%%%%%%%%%%%%%%%%%%%%%%%%%%%
\section*{Acknowledgments}
The authors would like to thank Kohdai Kuroiwa for collaboration in the early stage of this work.

\bibliographystyle{ytamsalpha}
\bibliography{ref} 

\newcommand{\etalchar}[1]{$^{#1}$}
\providecommand{\bysame}{\leavevmode\hbox to3em{\hrulefill}\thinspace}
\providecommand{\MR}{\relax\ifhmode\unskip\space\fi MR }
% \MRhref is called by the amsart/book/proc definition of \MR.
\providecommand{\MRhref}[2]{%
  \href{http://www.ams.org/mathscinet-getitem?mr=#1}{#2}
}
\providecommand{\href}[2]{#2}
\providecommand{\doihref}[2]{\href{#1}{#2}}
\providecommand{\arxivfont}{\tt}
\begin{thebibliography}{GKMR13}

\bibitem[AG04]{AaronsonGottesman:2004}
S.~Aaronson and D.~Gottesman, \emph{Improved simulation of stabilizer
  circuits}, \doihref{https://doi.org/10.1103/PhysRevA.70.052328}{Phys. Rev. A
  \textbf{70} (2004) 052328},
  \href{https://arxiv.org/abs/quant-ph/0406196}{{\arxivfont
  arXiv:quant-ph/0406196}}.

\bibitem[BG16]{BravyiGosset:2016}
S.~Bravyi and D.~Gosset, \emph{Improved classical simulation of quantum
  circuits dominated by {C}lifford gates},
  \doihref{https://doi.org/10.1103/PhysRevLett.116.250501}{Phys. Rev. Lett.
  \textbf{116} (2016) 250501},
  \href{https://arxiv.org/abs/1601.07601}{{\arxivfont arXiv:1601.07601
  [quant-ph]}}.

\bibitem[BK04]{BravyiKitaev:2004}
S.~Bravyi and A.~Kitaev, \emph{Universal quantum computation with ideal
  {C}lifford gates and noisy ancillas},
  \doihref{https://doi.org/10.1103/PhysRevA.71.022316}{Phys. Rev. A \textbf{71}
  (2005) 022316}, \href{https://arxiv.org/abs/quant-ph/0403025}{{\arxivfont
  arXiv:quant-ph/0403025}}.

\bibitem[BSS15]{BravyiSmithSmolin:2015}
S.~Bravyi, G.~Smith, and J.~A. Smolin, \emph{Trading classical and quantum
  computational resources},
  \doihref{https://doi.org/10.1103/PhysRevX.6.021043}{Phys. Rev. X \textbf{6}
  (2016) 021043}, \href{https://arxiv.org/abs/1506.01396}{{\arxivfont
  arXiv:1506.01396 [quant-ph]}}.

\bibitem[CG18]{chitambar2019}
E.~Chitambar and G.~Gour, \emph{Quantum resource theories},
  \doihref{https://doi.org/10.1103/RevModPhys.91.025001}{Rev. Mod. Phys.
  \textbf{91} (2019) 025001},
  \href{https://arxiv.org/abs/1806.06107}{{\arxivfont arXiv:1806.06107
  [quant-ph]}}.

\bibitem[CLL23]{ChiaLaiLin:2023}
N.-H. Chia, C.-Y. Lai, and H.-H. Lin, \emph{Efficient learning of $t$-doped
  stabilizer states with single-copy measurements},
  \doihref{https://doi.org/10.22331/q-2024-02-12-1250}{Quantum \textbf{8}
  (2024) 1250}, \href{https://arxiv.org/abs/2308.07014}{{\arxivfont
  arXiv:2308.07014 [quant-ph]}}.

\bibitem[CRSS96]{CalderbankRainsShorSloane:1996}
A.~R. Calderbank, E.~M. Rains, P.~W. Shor, and N.~J.~A. Sloane, \emph{{Quantum
  error correction via codes over GF(4)}},
  \doihref{https://doi.org/10.1109/18.681315}{IEEE Trans. Inf. Theory
  \textbf{44} (1998) 1369--1387},
  \href{https://arxiv.org/abs/quant-ph/9608006}{{\arxivfont
  arXiv:quant-ph/9608006}}.

\bibitem[GIKL23a]{GrewalIyerKretschmerLiang:2024_pseudorandom_improvedbound}
S.~Grewal, V.~Iyer, W.~Kretschmer, and D.~Liang, \emph{Improved stabilizer
  estimation via {B}ell difference sampling},
  \doihref{https://doi.org/10.1145/3618260.3649738}{Proceedings of the 56th
  Annual ACM Symposium on Theory of Computing (2024) 1352–1363},
  \href{https://arxiv.org/abs/2304.13915}{{\arxivfont arXiv:2304.13915
  [quant-ph]}}.

\bibitem[GIKL23b]{grewal2023efficient}
\bysame, \emph{Efficient learning of quantum states prepared with few
  non-{C}lifford gates}, \href{https://arxiv.org/abs/2305.13409}{{\arxivfont
  arXiv:2305.13409 [quant-ph]}}.

\bibitem[GKMR13]{GossetKliuchnikovMoscaRusso:2013}
D.~Gosset, V.~Kliuchnikov, M.~Mosca, and V.~Russo, \emph{An algorithm for the
  {T}-count}, \doihref{https://doi.org/10.5555/2685179.2685180}{Quantum Info.
  Comput. \textbf{14} (2014) 1261–1276},
  \href{https://arxiv.org/abs/1308.4134}{{\arxivfont arXiv:1308.4134
  [quant-ph]}}.

\bibitem[Got97]{Gottesman:1997}
D.~Gottesman, \emph{Theory of fault-tolerant quantum computation},
  \doihref{https://doi.org/10.1103/PhysRevA.57.127}{Phys. Rev. A \textbf{57}
  (1998) 127--137}, \href{https://arxiv.org/abs/quant-ph/9702029}{{\arxivfont
  arXiv:quant-ph/9702029}}.

\bibitem[Got98]{gottesman1998heisenberg}
\bysame, \emph{The {H}eisenberg representation of quantum computers},
  \href{https://arxiv.org/abs/quant-ph/9807006}{{\arxivfont
  arXiv:quant-ph/9807006}}.

\bibitem[GS13]{GilesSelinger:2013}
B.~Giles and P.~Selinger, \emph{Remarks on {M}atsumoto and {A}mano's normal
  form for single-qubit {C}lifford+{T} operators},
  \href{https://arxiv.org/abs/1312.6584}{{\arxivfont arXiv:1312.6584
  [quant-ph]}}.

\bibitem[GSJ24]{gidney2024magic}
C.~Gidney, N.~Shutty, and C.~Jones, \emph{Magic state cultivation: growing {T}
  states as cheap as {CNOT} gates},
  \href{https://arxiv.org/abs/2409.17595}{{\arxivfont arXiv:2409.17595
  [quant-ph]}}.

\bibitem[HC16]{HowardCampbell:2016}
M.~Howard and E.~Campbell, \emph{Application of a resource theory for magic
  states to fault-tolerant quantum computing},
  \doihref{https://doi.org/10.1103/PhysRevLett.118.090501}{Phys. Rev. Lett.
  \textbf{118} (2017) 090501},
  \href{https://arxiv.org/abs/1609.07488}{{\arxivfont arXiv:1609.07488
  [quant-ph]}}.

\bibitem[HG18]{HeinrichGross:2018}
M.~Heinrich and D.~Gross, \emph{Robustness of magic and symmetries of the
  stabiliser polytope},
  \doihref{https://doi.org/10.22331/q-2019-04-08-132}{Quantum \textbf{3} (2019)
  132}, \href{https://arxiv.org/abs/1807.10296}{{\arxivfont arXiv:1807.10296
  [quant-ph]}}.

\bibitem[HG23]{hangleiter2024}
D.~Hangleiter and M.~J. Gullans, \emph{Bell sampling from quantum circuits},
  \doihref{https://doi.org/10.1103/PhysRevLett.133.020601}{Phys. Rev. Lett.
  \textbf{133} (2024) 020601},
  \href{https://arxiv.org/abs/2306.00083}{{\arxivfont arXiv:2306.00083
  [quant-ph]}}.

\bibitem[HHY23]{hamaguchi2024handbookquantifying}
H.~Hamaguchi, K.~Hamada, and N.~Yoshioka, \emph{Handbook for {Q}uantifying
  {R}obustness of {M}agic},
  \doihref{https://doi.org/10.22331/q-2024-09-05-1461}{{Quantum} \textbf{8}
  (2024) 1461}, \href{https://arxiv.org/abs/2311.01362}{{\arxivfont
  arXiv:2311.01362 [quant-ph]}}.

\bibitem[HMH{\etalchar{+}}20]{HaferkampMontealegreMoraHeinrichEisertGrossRoth:2020}
J.~Haferkamp, F.~{Montealegre-Mora}, M.~Heinrich, J.~Eisert, D.~Gross, and
  I.~Roth, \emph{Efficient unitary designs with a system-size independent
  number of non-{C}lifford gates},
  \doihref{https://doi.org/10.1007/s00220-022-04507-6}{Comm. Math. Phys.
  \textbf{397} (2023) 995–1041},
  \href{https://arxiv.org/abs/2002.09524}{{\arxivfont arXiv:2002.09524
  [quant-ph]}}.

\bibitem[Hoe63]{Hoeffding:1963}
W.~Hoeffding, \emph{Probability inequalities for sums of bounded random
  variables}, \doihref{https://doi.org/10.1080/01621459.1963.10500830}{Journal
  of the American Statistical Association \textbf{58} (1963) 13--30}.

\bibitem[ITHF24]{itogawa2025}
T.~Itogawa, Y.~Takada, Y.~Hirano, and K.~Fujii, \emph{Efficient magic state
  distillation by zero-level distillation},
  \doihref{https://doi.org/10.1103/thxx-njr6}{PRX Quantum \textbf{6} (2025)
  020356}, \href{https://arxiv.org/abs/2403.03991}{{\arxivfont arXiv:2403.03991
  [quant-ph]}}.

\bibitem[Jon12]{Jones:2012}
C.~Jones, \emph{Low-overhead constructions for the fault-tolerant {T}offoli
  gate}, \doihref{https://doi.org/10.1103/PhysRevA.87.022328}{Phys. Rev. A
  \textbf{87} (2013) 022328},
  \href{https://arxiv.org/abs/1212.5069}{{\arxivfont arXiv:1212.5069
  [quant-ph]}}.

\bibitem[KMM12]{KliuchnikovMaslovMosca:2012}
V.~Kliuchnikov, D.~Maslov, and M.~Mosca, \emph{Fast and efficient exact
  synthesis of single-qubit unitaries generated by {C}lifford and {T} gates},
  \doihref{https://doi.org/10.5555/2535649.2535653}{Quantum Info. Comput.
  \textbf{13} (2013) 607–630},
  \href{https://arxiv.org/abs/1206.5236}{{\arxivfont arXiv:1206.5236
  [quant-ph]}}.

\bibitem[KN23]{kuroiwa2023averaging}
K.~Kuroiwa and Y.~O. Nakagawa, \emph{Averaging gate approximation error and
  performance of {U}nitary {C}oupled {C}luster ansatz in {P}re-{FTQC} {E}ra},
  \doihref{https://doi.org/10.22331/q-2025-07-21-1800}{{Quantum} \textbf{9}
  (2025) 1800}, \href{https://arxiv.org/abs/2301.04150}{{\arxivfont
  arXiv:2301.04150 [quant-ph]}}.

\bibitem[Kni04]{Knill:2004}
E.~Knill, \emph{Fault-tolerant postselected quantum computation: Schemes},
  \href{https://arxiv.org/abs/quant-ph/0402171}{{\arxivfont
  arXiv:quant-ph/0402171}}.

\bibitem[LC21]{lai2022}
C.-Y. Lai and H.-C. Cheng, \emph{Learning quantum circuits of some {T} gates},
  \doihref{https://doi.org/10.1109/TIT.2022.3151760}{IEEE Trans. Inf. Theory
  \textbf{68} (2022) 3951--3964},
  \href{https://arxiv.org/abs/2106.12524}{{\arxivfont arXiv:2106.12524
  [quant-ph]}}.

\bibitem[{Le }25]{Régent:2025}
F.-M. {Le Régent}, \emph{{Awesome Quantum Computing Experiments: Benchmarking
  Experimental Progress Towards Fault-Tolerant Quantum Computation}},
  \href{https://arxiv.org/abs/2507.03678}{{\arxivfont arXiv:2507.03678
  [quant-ph]}}.

\bibitem[LOH23]{Leone2024learningtdoped}
L.~Leone, S.~F.~E. Oliviero, and A.~Hamma, \emph{Learning t-doped stabilizer
  states}, \doihref{https://doi.org/10.22331/q-2024-05-27-1361}{{Quantum}
  \textbf{8} (2024) 1361}, \href{https://arxiv.org/abs/2305.15398}{{\arxivfont
  arXiv:2305.15398 [quant-ph]}}.

\bibitem[LOH{\etalchar{+}}25]{LeoneOlivieroHammaEisertBittel:2025}
L.~Leone, S.~F.~E. Oliviero, A.~Hamma, J.~Eisert, and L.~Bittel, \emph{The
  non-{C}lifford cost of random unitaries},
  \href{https://arxiv.org/abs/2505.10110}{{\arxivfont arXiv:2505.10110
  [quant-ph]}}.

\bibitem[LOLH22]{leone2024learningefficient}
L.~Leone, S.~F.~E. Oliviero, S.~Lloyd, and A.~Hamma, \emph{Learning efficient
  decoders for quasichaotic quantum scramblers},
  \doihref{https://doi.org/10.1103/PhysRevA.109.022429}{Phys. Rev. A
  \textbf{109} (2024) 022429},
  \href{https://arxiv.org/abs/2212.11338}{{\arxivfont arXiv:2212.11338
  [quant-ph]}}.

\bibitem[LW20]{LiuWinter:2020}
Z.-W. Liu and A.~Winter, \emph{{Many-Body Quantum Magic}},
  \doihref{https://doi.org/10.1103/PRXQuantum.3.020333}{PRX Quantum \textbf{3}
  (2022) 020333}, \href{https://arxiv.org/abs/2010.13817}{{\arxivfont
  arXiv:2010.13817 [quant-ph]}}.

\bibitem[MA08]{MatsumotoAmano:2008}
K.~Matsumoto and K.~Amano, \emph{Representation of quantum circuits with
  {C}lifford and $\pi/8$ gates},
  \href{https://arxiv.org/abs/0806.3834}{{\arxivfont arXiv:0806.3834
  [quant-ph]}}.

\bibitem[MCDT25]{MagniChristopoulosDeLucaTurkeshi:2025}
B.~Magni, A.~Christopoulos, A.~{De Luca}, and X.~Turkeshi,
  \emph{{Anticoncentration in {C}lifford circuits and beyond: From random
  tensor networks to pseudomagic states}},
  \href{https://arxiv.org/abs/2502.20455}{{\arxivfont arXiv:2502.20455
  [quant-ph]}}.

\bibitem[PWB15]{PashayanWallmanBartlett:2015}
H.~Pashayan, J.~J. Wallman, and S.~D. Bartlett, \emph{Estimating outcome
  probabilities of quantum circuits using quasiprobabilities},
  \doihref{https://doi.org/10.1103/PhysRevLett.115.070501}{Phys. Rev. Lett.
  \textbf{115} (2015) 070501},
  \href{https://arxiv.org/abs/1503.07525}{{\arxivfont arXiv:1503.07525
  [quant-ph]}}.

\bibitem[SKM{\etalchar{+}}20]{suzuki2021qulacsfast}
Y.~Suzuki, Y.~Kawase, Y.~Masumura, Y.~Hiraga, M.~Nakadai, J.~Chen, K.~M.
  Nakanishi, K.~Mitarai, R.~Imai, S.~Tamiya, T.~Yamamoto, T.~Yan, T.~Kawakubo,
  Y.~O. Nakagawa, Y.~Ibe, Y.~Zhang, H.~Yamashita, H.~Yoshimura, A.~Hayashi, and
  K.~Fujii, \emph{Qulacs: a fast and versatile quantum circuit simulator for
  research purpose},
  \doihref{https://doi.org/10.22331/q-2021-10-06-559}{{Quantum} \textbf{5}
  (2021) 559}, \href{https://arxiv.org/abs/2011.13524}{{\arxivfont
  arXiv:2011.13524 [quant-ph]}}.

\bibitem[VT98]{VidalTarrach:1998}
G.~Vidal and R.~Tarrach, \emph{Robustness of entanglement},
  \doihref{https://doi.org/10.1103/PhysRevA.59.141}{Phys. Rev. A \textbf{59}
  (1999) 141--155}, \href{https://arxiv.org/abs/quant-ph/9806094}{{\arxivfont
  arXiv:quant-ph/9806094}}.

\bibitem[ZLC00]{ZhouLeungChuang:2000}
X.~Zhou, D.~W. Leung, and I.~L. Chuang, \emph{Methodology for quantum logic
  gate construction},
  \doihref{https://doi.org/10.1103/PhysRevA.62.052316}{Phys. Rev. A \textbf{62}
  (2000) 052316}, \href{https://arxiv.org/abs/quant-ph/0002039}{{\arxivfont
  arXiv:quant-ph/0002039}}.

\bibitem[ZVGB25]{ZhangVijayGuBao:2025}
Y.~Zhang, S.~Vijay, Y.~Gu, and Y.~Bao, \emph{Designs from magic-augmented
  {C}lifford circuits}, \href{https://arxiv.org/abs/2507.02828}{{\arxivfont
  arXiv:2507.02828 [quant-ph]}}.

\end{thebibliography}
%%%%%%%%%%%%%%%%%%%%%%%%%%%%%%%%%%%%%%%%%%

\newpage

\appendix
\renewcommand{\thesubsection}{\thesection.\arabic{subsection}}

%%%%%%%%%%%%%%%%%%%%%%%%%%%%%%%%%%%%%%%%%%
\section{Symmetry reduction in numerical calculations}\label{app:symmetry}
%%%%%%%%%%%%%%%%%%%%%%%%%%%%%%%%%%%%%%%%%%
We describe how to exploit symmetries of a target quantum state to speed up calculation of its Clifford+$kT$ robustness, 
based on the technique introduced in~\cite{HeinrichGross:2018}. %for computing RoM.

Let $G$ be a symmetry (sub)group of the target state $\rho_{\text{target}}$ with a representation on the Hilbert space
as conjugation by Clifford operators $\rho \mapsto C_g\,\rho\, C_g^\dagger$ $(g \in G)$.
Defining the corresponding symmetrization map as
\begin{equation*}
	\Pi_G :\rho \ \mapsto\ \frac{1}{|G|} \sum_{g \in G} C_g \,\rho\, C_g^\dag,
\end{equation*}
the space of (mixed) Clifford+$kT$ states is invariant under $\Pi_G$ (since $C_g$'s are Clifford operators)
and one can restrict the search space in the basis pursuit problem Eq.~\eqref{eq:basis pursuit} to the space of symmetrized Clifford+$kT$ states without loss of generality.
Since $C_g$-conjugate states are mapped to the same state under $\Pi_G$,
this restriction can drastically reduce the number of variables in the optimization problem, making numerical computations practical for larger values of $n$ and $k$.

Accordingly, we only need to construct representative (strict) Clifford+$kT$ states, not all of them.
Since 
\begin{equation*}
	R_P(+\tfrac{\pi}{4}) C_g\ket{\psi}
	=
	C_g R_{C_g^\dagger P C_g}(\pm\tfrac{\pi}{4}) \ket{\psi}
\end{equation*}
from Eq.~\eqref{eq:CR_PC=R_P'},
taking $\ket{\psi}$ to be a representative state of an equivalence class $\{C_g\ket{\psi}\}_{g\in G}$,
any state in $\{R_P(+\tfrac{\pi}{4}) C_g\ket{\psi}\}_{P,g}$ is either $C_g$-conjugate or $C_g\,R_{C_g^\dagger P C_g}(-\tfrac{\pi}{2})$-conjugate to
a state in $\{R_{P}(+\tfrac{\pi}{4}) \ket{\psi}\}_{P}$.
Therefore, recalling Proposition~\ref{prop:n-qubit_strict_Clifford+kT_state_normal_form},
one can construct representative (strict) Clifford$+(k+1)T$ states just by applying Pauli $(+\frac{\pi}{4})$-rotations to representative (strict) Clifford+$kT$ states.

In our calculations in Sec.~\ref{subsec:numerical results of robustness}, we considered permutation symmetries of qubit indices
(realized by combination of CNOT gates)
for states such as $(T\ket{+})^{\otimes n}$, $\ket{SH}^{\otimes n}$, and $\mathrm{CCZ}\ket{+}^{\otimes 3}$.  
In addition, for $(T\ket{+})^{\otimes n}$ (resp. $\ket{SH}^{\otimes n}$), local symmetries under $H$ (resp. $SH$) gates on each qubit are also taken into account.
These symmetries significantly reduced the basis-pursuit problem size. For instance, in the case of $(T\ket{+})^{\otimes 3}$, the number of representative Clifford+$3T$ states is reduced from 4,098,600 to 95,074.

\end{document}